\documentclass[letterpaper, unpublished]{quantumarticle}
\pdfoutput=1

\usepackage[english]{babel}

\usepackage{amsmath, amssymb, amsfonts}
\usepackage{graphicx}
\usepackage[colorlinks=true, allcolors=blue]{hyperref}
\usepackage{braket}
\usepackage[ruled]{algorithm2e}

\usepackage{amsthm, thmtools}
\usepackage{thm-restate}
\newtheorem{theorem}{Theorem}
\newtheorem{definition}{Definition}
\newtheorem{lemma}{Lemma}

\usepackage[numbers]{natbib}
\bibliographystyle{halpha}

\usepackage[dvipsnames]{xcolor}

\newcommand\qft{{\mathsf{QFT}}}
\newcommand\oqft{{\widetilde{\qft}}}
\newcommand\U{{\mathcal{U}}}
\newcommand\oU{{\widetilde{\U}}}
\newcommand\C{{\mathcal{C}}}
\newcommand\E{{\mathcal{E}}}
\renewcommand\H{{\mathcal{H}}}
\newcommand\Hbad{{\H_\mathrm{bad}}}
\newcommand\imax{{i_\mathrm{max}}}

\newcommand\Xp{{\ket{\widetilde{X_i}}}}

\newcommand{\add}{\mathsf{ADD}}

\title{A log-depth in-place quantum Fourier transform that rarely needs ancillas}
\author{Gregory D. Kahanamoku-Meyer}
\thanks{gkm@mit.edu}
\affiliation{Massachusetts Institute of Technology}
\author{John Blue}
\affiliation{Massachusetts Institute of Technology}
\author{Thiago Bergamaschi}
\affiliation{University of California at Berkeley}
\author{Craig Gidney}
\affiliation{Google, Inc.}
\author{Isaac L. Chuang}
\affiliation{Massachusetts Institute of Technology}
\date{}

\begin{document}

\maketitle

\onecolumn

\begin{abstract}
When designing quantum circuits for a given unitary, it can be much cheaper to achieve a good approximation on most inputs than on all inputs.
In this work we formalize this idea, and propose that such ``optimistic quantum circuits'' are often sufficient in the context of larger quantum algorithms.
For the rare algorithm in which a subroutine needs to be a good approximation on all inputs, we provide a reduction which transforms optimistic circuits into general ones.
Applying these ideas, we build an optimistic circuit for the in-place quantum Fourier transform (QFT).
Our circuit has depth $O(\log (n / \epsilon))$ for tunable error parameter $\epsilon$, uses $n$ total qubits, i.e. no ancillas, is local for input qubits arranged in 1D, and is measurement-free.
The circuit's error is bounded by $\epsilon$ on all input states except an $O(\epsilon)$-sized fraction of the Hilbert space.
The circuit is also rather simple and thus may be practically useful.
Combined with recent QFT-based fast arithmetic constructions (\href{https://arxiv.org/abs/2403.18006}{arxiv.org/2403.18006}), the optimistic QFT yields factoring circuits of nearly linear depth using only $2n + O(n/\log n)$ total qubits.
Additionally, we apply our reduction technique to yield an approximate QFT with well-controlled error on all inputs; it is the first to achieve the asymptotically optimal depth of $O(\log (n/\epsilon))$ with a sublinear number of ancilla qubits.
The reduction uses long-range gates but no measurements.
\end{abstract}

\vspace{1em}

\twocolumn

\section{Introduction}\label{introduction}

In classical computing, there exist an abundance of algorithms which perform better in the average case than in the worst case.
Occasional ``bad'' (inefficient) inputs do not substantially detract from the overall performance of an algorithm if the cost can be amortized across many function calls.
Moreover, there exist strategies for avoiding an
algorithm's worst-case behavior entirely: it may be possible to detect bad inputs early and switch to an algorithm which is more performant on them, or to solve the worst case by building off of the solution of a related average-case input---a technique known as a worst-case to average-case reduction
\cite{impagliazzo_personal_1995,bogdanov_average-case_2006,goldreich_notes_2011,asadi_quantum_2024}.

However, the application of these techniques to the design of quantum circuits requires overcoming considerable obstacles.
One of the most fundamental is that we cannot conditionally execute different algorithms depending on the input---we may receive a superposition of ``good'' and ``bad'' inputs, and thus we must follow every code path, every time!
On the other hand, quantum mechanics provides us with unique tools to overcome such obstacles.
By linearity a sufficiently good \emph{approximation} of a desired unitary can be used in place of the exact unitary, without meaningfully affecting the measurement results at the end of a quantum computation.
Typically, this fact has been used to find approximate circuits whose \emph{maximum} error over arbitrary input states is small.
Unfortunately, such constructions cannot leverage the fact that much of the complexity in a quantum circuit may be due to a small ``corner case'' subspace of possible inputs, that is more difficult to approximate.

\begin{figure*}[ht]
	\begin{center}
		\includegraphics{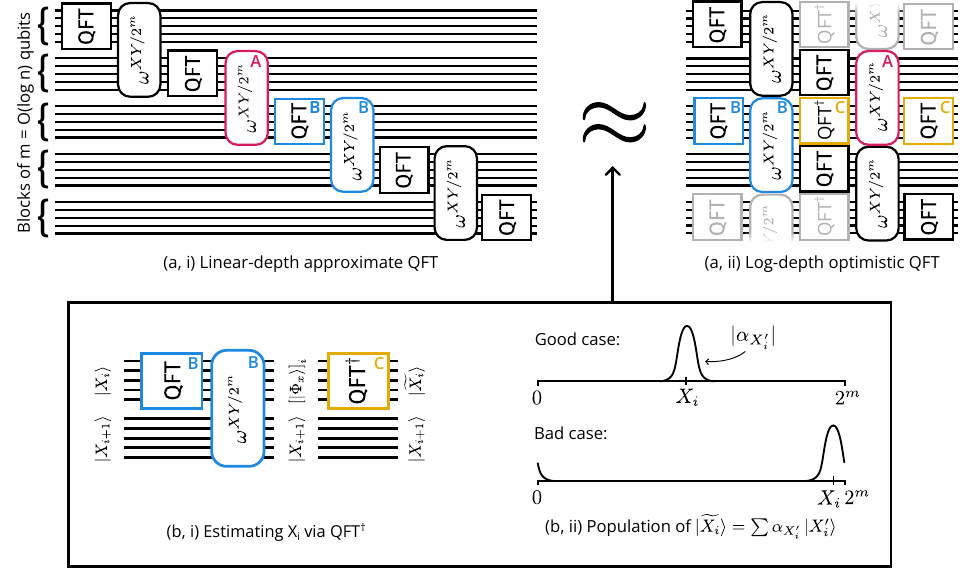}
	\end{center}
	\caption{
		\textbf{Design of the optimistic QFT.}
		Starting with a ``blocked'' version of a standard approximate QFT circuit (a,i), we add an identity of the form $\qft^\dagger \qft$ (labeled "C", yellow border) and then move approximately commuting blocks past each other to yield the optimistic QFT (a,ii).
		The blocks approximately commute because applying $\qft^\dagger$ is equivalent to quantum phase estimation, which yields a superposition of values numerically close to the input state $\ket{X_i}$ (b,i).
		The quantum phase estimation fails when $X_i$ is near 0 or $2^m$ because the superposition wraps around mod $2^m$ (b,ii).
	}
	\label{fig:oqft-detail}
\end{figure*}

In this work, we explore the design of quantum circuits which give a good approximation of the desired unitary only on \emph{most}, but not all, of the input Hilbert space.
We call these ``optimistic quantum circuits.''%
\footnote{We do not use the term ``average case'' for our circuits because the average case is defined with respect to a distribution over inputs, which depends on the context in which an algorithm is being used.
We define optimistic quantum circuits in a way that is context- and even basis-independent (see Definition~\ref{def:opt-circuits-basis-independent}).}
We propose to use such circuits in two different ways: either 1) as drop-in replacements for the desired unitary, if it can be shown (or perhaps heuristically assumed) that pathological inputs are unlikely to occur, or 2) as the core of a worst-to-average case reduction which yields a provably good approximation even on worst-case inputs.
We give an explicit way of constructing such a reduction.

Applying these ideas, we construct an optimistic circuit
for the quantum Fourier transform (QFT), which has a set of features that are provably impossible to achieve simultaneously for a ``full'' approximate QFT.
For error parameter $\epsilon$ and input size $n$ qubits, the circuit achieves depth $O(\log (n/\epsilon))$ with no ancilla qubits, long-range gates, nor measurements.
This can be compared to previous logarithmic-depth QFT constructions: the first required $O(n\log n)$ ancilla qubits~\cite{cleve_fast_2000}, which was improved to $O(n)$ ancillas by Hales~\cite{hales_quantum_2002}.
Both constructions require long-range gates; very recently the connectivity was improved to nearest-neighbor (in 1D) through the use of measurement and feedforward while maintaining $O(n)$ ancillas~\cite{baumer_approximate_2025}.

The key idea behind our circuit is that leveraging quantum phase estimation on blocks of qubits allows us to approximately map the output state on that block back to the input, which we may then use to control phase rotations on a neighboring block (Fig~\ref{fig:oqft-detail}a).
This phase estimation fails on basis states for which the phase estimation ``wraps around'' modulo $2^m$, where $m$ is the block size (Fig~\ref{fig:oqft-detail}b); fortunately, the error from this wraparound is small on the vast majority of the Hilbert space and thus the construction yields an optimistic QFT.
We show that our optimistic QFT can be used with recent QFT based arithmetic circuits to construct an optimistic multiplier~\cite{kahanamoku-meyer_fast_2024}, and that this multiplier is sufficient to implement Shor's algorithm for factoring with high probability of success.
We also instantiate the reduction discussed earlier, and show that a randomized reduction yields an approximate QFT having low error on arbitrary inputs in depth $O(\log(n/\epsilon))$ using $n + O(n/\log(n/\epsilon))$ total qubits.
A straightforward purification of the randomized reduction yields a derandomized approximate QFT with the same depth and $3n+O(n/\log(n/\epsilon))$ total qubits.

We begin below by formally defining optimistic quantum circuits and presenting some abstract results about their behavior (Section~\ref{sec-opt-circuits}), then specialize to present an ancilla-free logarithmic-depth optimistic circuit construction for the QFT (Section~\ref{sec:opt-qft}).  This construction can be used to build an optimistic multiplier that reduces resources required for factoring (Section~\ref{sec:opt-qft-factoring}).  An approximate QFT which works on arbitrary inputs is also given (Section~\ref{sec:oqft-reduction}) before we conclude with an outlook (Section~\ref{sec-discussion}).  Proofs are largely deferred to the appendices.

\section{Optimistic quantum circuits}
\label{sec-opt-circuits}

Intuitively, an optimistic quantum circuit is one in which the error is small on most basis states (for any basis).
Definitionally, it is more convenient to consider the \emph{average} error over all basis states, as that turns out to be a basis-independent quantity.
Thus, we begin by presenting such a definition and then in Lemma~\ref{lem:large-error-subspaces} show that the definition captures the intuition of small error on all but a few states in any basis.

\begin{definition}[Optimistic quantum circuits]
	\label{def:opt-circuits}
	Let $\U$ be a unitary operation on a Hilbert space $\H$.
	Consider a quantum circuit $\C$ which induces a unitary $\oU$ on $\H$.
	$\C$ is an optimistic quantum circuit for $\U$ with error bound $\epsilon$ if, for any complete set of basis states $\{\ket{\phi_i}\}$ for $\H$,
	\begin{equation}
		\frac{1}{\dim{\H}} \sum_i |\oU \ket{\phi_i} - \U \ket{\phi_i} |^2 < \epsilon.
	\end{equation}
\end{definition}

This formulation is convenient because it gives us a clear method for showing that a particular construction is an optimistic circuit with error $\epsilon$: pick a basis, and show that the average error over states in this basis is less than $\epsilon$.
It is less obvious, but true, that this definition is \emph{basis-independent}.
In order to show basis independence more clearly, we present a second equivalent definition of optimistic quantum circuits:
\begin{definition}[Equivalent to Definition~\ref{def:opt-circuits}]
	\label{def:opt-circuits-basis-independent}
	For $\U$, $\oU$, $\H$, and $\C$ as in Definition~\ref{def:opt-circuits}, $\C$ is an optimistic quantum circuit for $\U$ with error bound $\epsilon$ if
	\begin{equation}
		\frac{||\oU - \U||_F^2}{\dim{\H}} \leq \epsilon
	\end{equation}
	where $\|\E\|_F^2 = \mathrm{Tr}\left[\E^\dagger \E\right]$ is the Frobenius norm squared.
\end{definition}

Two features of Definition~\ref{def:opt-circuits} are crucial for this basis-independence to hold: first, that the per-state error is measured not via fidelity, but instead via length of the error vector, which captures errors in phase; and second, that we are concerned with the \emph{average} error over basis states, rather than the maximum.

Although the maximum error over basis states is essentially unbounded, our original intuition for optimistic quantum circuits says that the size of the subspace in which these large errors occur must be small.
Indeed, for an optimistic quantum circuit with error parameter $\epsilon$, the number of basis states for which the error is $O(1)$ is bounded by at most $O(\epsilon) \cdot \dim{\H}$.
We formalize this idea in the following Lemma.
\begin{lemma}[Size of large-error subspaces]
	\label{lem:large-error-subspaces}
	For $\U$, $\oU$, $\H$, and $\C$ as in Definition~\ref{def:opt-circuits}, consider a subspace $\Hbad \subseteq \H$ with associated projector $\Pi_{\Hbad}$.
	Let the error on this subspace be $\mu = ||(\oU-\U)\Pi_{\Hbad}||_F^2 / \dim{\Hbad}$.
	Then, if $\C$ is an optimistic circuit for $\U$ with error parameter $\epsilon$, $\dim{\Hbad} \leq \dim \H \cdot \epsilon/\mu$.
\end{lemma}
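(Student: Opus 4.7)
The plan is to reduce the claim directly to the basis-independent Frobenius formulation in Definition~\ref{def:opt-circuits-basis-independent}, using the fact that ``restricting to a subspace'' can only decrease a sum of squared norms.

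First I would pick an orthonormal basis $\{\ket{\phi_i}\}_{i=1}^{\dim \H}$ for $\H$ whose first $\dim \Hbad$ elements span $\Hbad$. In this basis,
\begin{equation}
    \|(\oU-\U)\Pi_{\Hbad}\|_F^2 = \sum_{i=1}^{\dim \Hbad} \|(\oU-\U)\ket{\phi_i}\|^2,
\end{equation}
while $\|\oU-\U\|_F^2 = \sum_{i=1}^{\dim \H}\|(\oU-\U)\ket{\phi_i}\|^2$. Since the additional terms are non-negative, $\|(\oU-\U)\Pi_{\Hbad}\|_F^2 \leq \|\oU-\U\|_F^2$. (Equivalently, one can write $\|(\oU-\U)\Pi_{\Hbad}\|_F^2 = \mathrm{Tr}[(\oU-\U)^\dagger(\oU-\U)\,\Pi_{\Hbad}]$ and observe that subtracting this from $\|\oU-\U\|_F^2$ leaves $\mathrm{Tr}[(\oU-\U)^\dagger(\oU-\U)\,(I-\Pi_{\Hbad})]\geq 0$, because the trace of a positive semi-definite operator against a projector is non-negative.)

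Next I would invoke Definition~\ref{def:opt-circuits-basis-independent} to bound $\|\oU-\U\|_F^2 \leq \epsilon\cdot \dim \H$, and combine with the definition $\mu = \|(\oU-\U)\Pi_{\Hbad}\|_F^2 / \dim \Hbad$ to obtain $\mu \cdot \dim \Hbad \leq \epsilon \cdot \dim \H$. Rearranging yields the claimed bound $\dim \Hbad \leq \dim \H \cdot \epsilon/\mu$.

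There is no real obstacle here: the lemma is essentially a one-line consequence of the basis-independent definition. The only thing worth being careful about is justifying the inequality $\|(\oU-\U)\Pi_{\Hbad}\|_F^2 \leq \|\oU-\U\|_F^2$, which is why I would spell out the basis extension (or equivalently the PSD/projector trace argument) rather than assert it.
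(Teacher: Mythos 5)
Your proposal is correct and follows essentially the same route as the paper: the paper's one-line proof rests on exactly the inequality $\|(\oU-\U)\Pi_{\Hbad}\|_F^2 \leq \|\oU-\U\|_F^2$ (stated there in the rearranged form with dimensions divided out), followed by substituting the definitions of $\mu$ and $\epsilon$. Your version simply makes the justification of that inequality explicit via the basis-extension (or PSD trace) argument, which the paper leaves implicit.
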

\begin{proof}
	Observe that for any subspace $\H' \subseteq \H$, the error on $\H$ (as defined in Definition~\ref{def:opt-circuits}) is bounded from below by the error on the subspace, scaled by their relative size:
	\begin{equation*}
		\frac{||\oU - \U||_F^2}{\dim{\H}} \geq \frac{\dim{\H'}}{\dim{\H}} \left( \frac{||(\oU - \U)\Pi_{\H'}||_F^2}{\dim{\H'}} \right)
	\end{equation*}
	Substituting the definitions of $\mu$ and $\epsilon$ yields the desired result.
\end{proof}

Because in any basis the error is small on all but a few states, in practice it should almost always be possible to use optimistic quantum circuits directly in place of exact ones.
In Section~\ref{sec:opt-qft-factoring} we show explicitly that this is the case for an optimistic multiplication circuit, in the context of Shor's algorithm for factoring.
But perhaps in some cases the input states will be pathologically concentrated in the large-error subspace; to handle such cases, we next show a generic reduction by which worst-case approximate circuits can be constructed from optimistic circuits.

\subsection{Reduction: worst-case approximate circuits from optimistic circuits}
\label{subsec:reduction}

From Definition~\ref{def:opt-circuits} (or Lemma~\ref{lem:large-error-subspaces}), it follows that an optimistic circuit's error is expected to be small when applied to random input states.
In consideration of that fact, the core idea of our reduction is to apply a random unitary $V$ before the optimistic circuit, and then apply $\hat{V}^\dagger = \U V^\dagger \U^\dagger$ after it.
The action of $V$ effectively turns every input state into a random state, and $\hat{V}^\dagger$ inverts the randomization.
If efficient circuits exist for both $V$ and $\hat{V}^\dagger$, the entire operation $\hat{V}^\dagger \oU V \approx \U$ can be implemented efficiently.

Applying truly random unitaries (i.e. drawn from the Haar measure) would be very expensive; the key result of this subsection is that it is sufficient to draw $V$ from a unitary 1-design, which is arguably the easiest random unitary distribution to construct.
This fact provides freedom to choose a distribution that is convenient in practice; for example, in Section~\ref{sec:oqft-reduction} we are able to find a 1-design for the optimistic QFT where both $V$ and $\hat{V}^\dagger$ have logarithmic-depth circuits requiring few ancillas.

We formalize our randomized reduction in the following theorem; later, in Theorem~\ref{thm:purification}, we show that if the 1-design is a uniform distribution over some set of unitaries $\{V_i\}$, the reduction can be derandomized via purification.%
\footnote{For simplicity, we focus on purifying 1-designs over uniform distributions, as the purification is easily implementable.}

\begin{theorem}[Randomized reduction]
	\label{thm:randomized-reduction}
	Consider an optimistic quantum circuit $\C$ with error parameter $\epsilon$, and $\U$, $\oU$, and $\H$ defined as in Definition~\ref{def:opt-circuits}.
	Furthermore let $d$ be a unitary 1-design on $\H$.
	Then for any state $\ket{\psi} \in \H$, $\hat{V}^\dagger \oU V \ket{\psi}$ is a good approximation of $\U \ket{\psi}$ with high probability over the distribution of $V$:
	\begin{equation}\label{equation:average_error}
		\mathop{\mathbb{E}}_{V \sim d} \left[ | \hat{V}^\dagger \oU V \ket{\psi} - \U \ket{\psi} |^2 \right] \leq \epsilon
	\end{equation}
	where $\hat{V} = \U V \U^\dagger$ and $\mathop{\mathbb{E}}_{V \sim d}[\cdot]$ denotes the expectation value over unitaries $V$ drawn from $d$.
\end{theorem}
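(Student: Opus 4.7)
The plan is to reduce the claim to a direct application of the 1-design averaging identity $\mathop{\mathbb{E}}_{V \sim d}[V^\dagger A V] = (\mathrm{Tr}(A)/\dim \H)\, I$, after rewriting the error vector so that the Frobenius norm of $\oU - \U$ appears naturally.

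First I would massage the error vector. Using $\hat{V}^\dagger = \U V^\dagger \U^\dagger$ and inserting $I = V^\dagger V$ inside $\U \ket{\psi}$, one gets
\begin{equation*}
    \hat{V}^\dagger \oU V \ket{\psi} - \U \ket{\psi} = \U V^\dagger \bigl(\U^\dagger \oU - I\bigr) V \ket{\psi}.
\end{equation*}
Let $M := \U^\dagger \oU - I$. Since $\U V^\dagger$ is unitary, taking squared norms gives
\begin{equation*}
    \bigl| \hat{V}^\dagger \oU V \ket{\psi} - \U \ket{\psi} \bigr|^2 = \bra{\psi} V^\dagger M^\dagger M V \ket{\psi}.
\end{equation*}

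Next I would take the expectation over $V \sim d$ and pull the expectation through linearity inside the matrix element:
\begin{equation*}
    \mathop{\mathbb{E}}_{V \sim d}\bigl[ V^\dagger M^\dagger M V \bigr] = \frac{\mathrm{Tr}(M^\dagger M)}{\dim \H}\, I,
\end{equation*}
by the defining property of a unitary 1-design (applied to the positive operator $M^\dagger M$). The trace simplifies as $\mathrm{Tr}(M^\dagger M) = \mathrm{Tr}\bigl[(\oU-\U)^\dagger (\oU - \U)\bigr] = \|\oU - \U\|_F^2$, using unitarity of $\U$ to pull out a factor of $\U^\dagger \U = I$.

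Putting the pieces together, the expected squared error is $\|\oU - \U\|_F^2 / \dim \H$ times $\braket{\psi|\psi} = 1$, which is at most $\epsilon$ by Definition~\ref{def:opt-circuits-basis-independent}. There is no real obstacle here beyond spotting the correct factoring in the first step; once the error vector is written as $\U V^\dagger M V \ket{\psi}$, the 1-design identity is precisely what is needed to convert a state-dependent expectation into the basis-independent Frobenius quantity that the definition of an optimistic circuit controls.
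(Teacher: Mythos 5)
Your proposal is correct and follows essentially the same route as the paper's proof: both rewrite the error vector so that only the conjugated error operator $\oU-\U$ acts on $V\ket{\psi}$, and then apply the first-moment (1-design) averaging identity to get $\|\oU-\U\|_F^2/\dim\H \leq \epsilon$ via Definition~\ref{def:opt-circuits-basis-independent}. The only cosmetic difference is that you twirl $M^\dagger M$ with $V^\dagger(\cdot)V$ instead of invoking the state 1-design formed by $V\ket{\psi}$; this is harmless, since the map $A \mapsto \mathop{\mathbb{E}}_{V\sim d}[V^\dagger A V]$ is the adjoint of the standard 1-design twirl and therefore also sends every operator to $\mathrm{Tr}(A)\,\mathbb{I}/\dim\H$.
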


\begin{proof}
	Observe that by definition of $\hat{V}$, $\hat{V}^\dagger \U V = \U$.
	Letting $\E = \oU - \U$ to simplify notation we thus have
	\begin{equation}
		\mathop{\mathbb{E}}_{V \sim d} \left[ | \hat{V}^\dagger \oU V \ket{\psi} - \U \ket{\psi} |^2 \right] = \mathop{\mathbb{E}}_{V \sim d} \left[ | \hat{V}^\dagger \E V \ket{\psi} |^2 \right].
	\end{equation}
	Observe that $\hat{V}^\dagger$ is unitary and thus does not affect the value of the norm, and can be dropped.
	Furthermore, note that the distribution over states $\ket{\phi} \equiv V\ket{\psi}$ forms a state 1-design.
    Denoting that state 1-design as $d'$ we have
	\begin{equation}
		\mathop{\mathbb{E}}_{V \sim d} \left[ | \hat{V}^\dagger \E V \ket{\psi} |^2 \right] = \mathop{\mathbb{E}}_{\ket{\phi} \sim d'} \left[ |\E \ket{\phi}|^2 \right] = \frac{\mathrm{Tr}\left[\E^\dagger \E \right]}{\dim{\H}}
	\end{equation}
	where the second equality comes from the definition of a state 1-design.
    Definition~\ref{def:opt-circuits-basis-independent} of an optimistic quantum circuit states that $\mathrm{Tr}\left[\E^\dagger \E \right]/\dim{\H} \leq \epsilon$, proving the theorem.
\end{proof}

As alluded to earlier, we may derandomize the reduction in Theorem~\ref{thm:randomized-reduction} via purification, by encoding the distribution of unitaries into the quantum state of a control register.
This idea is formalized in the following theorem.

\begin{theorem}[Derandomized reduction]
	\label{thm:purification}
	Consider an optimistic quantum circuit $\C$ with error parameter $\epsilon$, and let $\U$, $\oU$, and $\H$ be defined as in Definition~\ref{def:opt-circuits}.
	Now consider a unitary 1-design consisting of the uniform distribution over a discrete set of $k$ unitaries $\{V_i\}$ indexed by $i$, and let $V' = \sum_i \ket{i}\bra{i} \otimes V_i$.
	Then, for arbitrary states $\ket{\psi} \in \H$, the unitary $\U' = \hat{V}'^\dagger (\mathbb{I} \otimes \oU) V'$ applied to the state $\frac{1}{\sqrt{k}} \sum_i \ket{i} \otimes \ket{\psi}$ has error
	\begin{equation*}
		\left| (\U' - \mathbb{I} \otimes \U) \left(\frac{1}{\sqrt{k}} \sum_i \ket{i} \otimes \ket{\psi}\right) \right|^2 \leq \epsilon
	\end{equation*}
\end{theorem}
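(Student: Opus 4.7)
The plan is to reduce this statement directly to Theorem~\ref{thm:randomized-reduction} by exploiting the block-diagonal structure of $V'$ and $\hat{V}'$ on the control register, together with the fact that the input is a uniform superposition over the control index.

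First I would explicitly compute the action of $\U'$ on the input. Applying $V'$ to $\frac{1}{\sqrt{k}}\sum_i \ket{i}\otimes\ket{\psi}$ yields $\frac{1}{\sqrt{k}}\sum_i \ket{i}\otimes V_i\ket{\psi}$; applying $\mathbb{I}\otimes \oU$ gives $\frac{1}{\sqrt{k}}\sum_i \ket{i}\otimes \oU V_i\ket{\psi}$; and applying $\hat{V}'^\dagger = \sum_j \ket{j}\bra{j}\otimes \hat{V}_j^\dagger$ (with $\hat{V}_i = \U V_i \U^\dagger$, by analogy with Theorem~\ref{thm:randomized-reduction}) gives $\frac{1}{\sqrt{k}}\sum_i \ket{i}\otimes \hat{V}_i^\dagger \oU V_i \ket{\psi}$. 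For the target state, I would use the key identity $\hat{V}_i^\dagger \U V_i = \U$ to rewrite $(\mathbb{I}\otimes \U)\frac{1}{\sqrt{k}}\sum_i \ket{i}\otimes\ket{\psi} = \frac{1}{\sqrt{k}}\sum_i \ket{i}\otimes \hat{V}_i^\dagger \U V_i \ket{\psi}$, so that the difference factors term-by-term on the control register.

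Next I would take the squared norm of the difference. Because the control states $\ket{i}$ are orthonormal, cross terms vanish and I get
\begin{equation*}
\left|(\U' - \mathbb{I}\otimes\U)\tfrac{1}{\sqrt{k}}\sum_i \ket{i}\otimes\ket{\psi}\right|^2 = \frac{1}{k}\sum_{i=1}^{k} \left|\hat{V}_i^\dagger (\oU - \U) V_i \ket{\psi}\right|^2.
\end{equation*}
The right-hand side is exactly the expectation $\mathop{\mathbb{E}}_{V\sim d}\left[|\hat{V}^\dagger \E V\ket{\psi}|^2\right]$ under the uniform distribution $d$ over $\{V_i\}$, which by assumption is a unitary 1-design. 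Invoking Theorem~\ref{thm:randomized-reduction} bounds this expectation by $\epsilon$, completing the proof.

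The argument is essentially bookkeeping — there is no real obstacle, since the purification is designed precisely so that the branches of the control register carry out the randomized reduction coherently. The one point to be careful about is making the definition of $\hat{V}'$ explicit (it is the controlled version of the $\hat{V}_i$, i.e.\ $\hat{V}' = (\mathbb{I}\otimes\U) V' (\mathbb{I}\otimes\U^\dagger) = \sum_i \ket{i}\bra{i}\otimes \hat{V}_i$), so that its block-diagonal action cleanly commutes past the control projectors and eliminates cross terms when taking the norm.
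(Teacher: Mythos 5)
Your proof is correct and follows essentially the same route as the paper's: compute the block-diagonal action of $\U'$ on the uniform control superposition, use orthonormality of the control states to factor the squared norm into a uniform average over $i$, and then invoke Theorem~\ref{thm:randomized-reduction}. Your version is slightly more explicit about the definition of $\hat{V}'$ and the vanishing of cross terms, but the argument is the same.
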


\begin{proof}
    We begin by understanding the action of the derandomized unitary $\mathcal{U}'$:
    \begin{equation}
        \U'\bigg(\sum_i \frac{\ket{i}}{\sqrt{k}}\otimes \ket{\psi}\bigg) = \sum_i \frac{\ket{i}}{\sqrt{k}}\otimes \hat{V}_i^\dagger\oU V_i\ket{\psi}
    \end{equation}
    Which then allows us to expand the error, and relate it to the randomized case:
    \begin{align}
        &\bigg|(\U'-\mathbb{I}\otimes \U)\bigg(\sum_i \frac{\ket{i}}{\sqrt{k}}\otimes \ket{\psi}\bigg)\bigg|^2 = \\=& \bigg|\sum_i \frac{\ket{i}}{\sqrt{k}}\otimes  (\hat{V}_i^\dagger\oU V_i - \U)\ket{\psi}\bigg|^2= \\ =& \mathop{\mathbb{E}}_{V \sim d} \left[ | \hat{V}^\dagger \oU V \ket{\psi} - \U \ket{\psi} |^2 \right] \leq \epsilon
    \end{align}
    \noindent where in the last line, we leveraged Theorem~\ref{thm:randomized-reduction}.
\end{proof}

We remark that the notion of average error $\epsilon$ discussed above, implies an error $O(\sqrt{\epsilon})$ in diamond distance when viewed as a channel.

\subsection{Warm-up example: optimistic quantum addition}
\label{subsec:opt-adder}

As a first example of an optimistic quantum circuit, we give an argument for why a low-depth circuit may approximate the addition unitary, for most inputs. Let the addition unitary be
\begin{equation*}
\add \ket{a}\ket{b}\ket{0} = \ket{a}\ket{s} \,,
\end{equation*}
where \(a\) and \(b\) are \(n\)-bit integers, and \(s = a+b\) is an \(n+1\) bit integer.  The construction employs some common notation used when describing addition circuits. Denote the bits of an integer \(x\) as \(x = x_{n-1}x_{n-2}\dots x_1x_0\), where \(x_0\) is the least-significant bit. The bits of \(s\) are then:
\begin{align}
  c_0 &= 0 \label{eqn:add-1}\\
  c_{i+1} &= \textsf{MAJ}(a_{i}, b_i, c_i) \text{ for } 0 \le i \le n-1 \label{eqn:add-2} \\
  s_i &= a_i \oplus b_i \oplus c_i \text{ for } 0 \le i \le n-1 \label{eqn:add-3}\\
  s_n &= c_n \label{eqn:add-4}
\end{align}
where the \textsf{MAJ} function equals one if the majority of its inputs equal one, and zero otherwise. The values \(c_i\) are referred to as the \emph{carries}.

As is well known, equations \eqref{eqn:add-1}-\eqref{eqn:add-4} show that addition is \emph{non-local}, in that we cannot compute the output bit \(s_i\) just by looking at a local neighborhood of bits around index $i$. Instead, information is propagated from one end of the input all the way to the other via the carry bits. This non-locality is a feature that will show up again when we examine the QFT in the following section. Additionally, this non-locality places constraints on the possible circuits we can use to implement addition; it is impossible to achieve logarithmic depth for addition circuits without some kind of long-range information flow.  However, the passing of this information can be accomplished in many ways besides direct use of geometrically long-range gates, including through classical means, via measurement and feedforward operations \cite{baumer_approximate_2025}, such that quantum operations may remain more local.

The approach we take here is different, and is motivated by the intuition that  the fraction of inputs for which that information needs to be propagated over a large distance should be small. Thus, the propagation of carries can simply be cut off after some number of bits. A logarithmic-size cutoff yields a logarithmic-depth circuit that also only uses logarithmically-local gates---yet for most inputs, propagating carries a logarithmic distance is all that is needed.

Because there already exist exact adders that simultaneously achieve low qubit count and depth~\cite{takahashi_fast_2008, remaud_ancilla-free_2025}, the benefits of using optimistic circuits in this simple example are essentially just locality of gates.
In the next section we describe an optimistic construction for the quantum Fourier transform that yields dramatic improvements over non-optimistic constructions in depth, locality, and qubit count, simultaneously.

\section{Ancilla-free logarithmic-depth optimistic QFT}
\label{sec:opt-qft}

In this section we describe in detail our construction for the
optimistic quantum Fourier transform (QFT).
We begin with some discussion of the structure of the QFT, and build off of that to define our optimistic QFT circuit, which we denote $\oqft$.
We prove that our construction is an optimistic circuit with error parameter $\epsilon$; we also explicitly identify the computational basis states which have large error (see Lemma~\ref{lem:large-error-subspaces}).

The quantum Fourier transform is typically expressed in terms of Fourier basis states:%
\footnote{In this work we exclusively discuss the QFT modulo $2^n$.}
\begin{equation} \label{eq:qft-math}
	\ket{\Phi_x} \equiv \qft \ket{x} = \sum_{y=0}^{2^n-1} e^{2\pi i xy/2^n} \ket{y}
\end{equation}
where $n$ is the number of qubits, and for simplicity we have dropped normalization (as we do throughout this section).
Perhaps surprisingly given the form of Equation~\ref{eq:qft-math}, Fourier basis states are product states; they can also be expressed thus:
\begin{equation} \label{eq:qft-product}
	\ket{\Phi_x} = \bigotimes_{i=0}^{n-1} \left( \ket{0} + e^{2\pi i \; 0.x_i x_{i+1} \cdots} \ket{1} \right)
\end{equation}
where $x_i$ is the $i^\mathrm{th}$ bit of $x$, and $0.x_i x_{i+1} \cdots = 2^i x / 2^n \bmod 1$ is a binary fraction
\cite{nielsen_quantum_2011}.%
\footnote{
	It is important to note that with the indexing of Equation~\ref{eq:qft-product}, $x$ and $y$ in Equation~\ref{eq:qft-math} have different endianness.
	Because it is natural for the circuits we discuss, we use that indexing throughout this work.
	If needed, the bit reversal unitary on $n$ qubits can be implemented with a single layer of $n/2$ \textsf{SWAP} gates.
}
It was observed early in the study of the QFT that an approximation of $\ket{\Phi_x}$ with fidelity $\epsilon$ can be achieved by simply truncating the binary fraction in each qubit's phase to $m = O(\log (n/\epsilon))$ bits~\cite{coppersmith_approximate_2002}.
This yields a standard construction for the approximate QFT: iterate through each qubit, applying a Hadamard gate to qubit $i$ and then a series of controlled phase rotations from qubits $i+1$ through $i+m-1$.
This construction requires only $O(n \log (n/\epsilon))$ gates, but has linear depth, because the controlled phase rotations on each qubit rely on the following qubits still being in their initial (input) state.

Equation~\ref{eq:qft-product} may seem to suggest that information remains local in the QFT, because each qubit's output state only depends on $O(\log (n/\epsilon))$ nearby qubits.
Perhaps the approximate QFT should thus be achievable via a low-depth circuit of entirely local gates.
Relatedly, it has recently been shown that the QFT has small entanglement~\cite{chen_quantum_2023}.
However, a simple counterexample shows that information from a small subset of the qubits can affect the entire output.
Consider the two states $\ket{\phi_0} = \qft \ket{000\cdots000}$ and $\ket{\phi_1} = \qft \ket{111\cdots111}$.
From Equation~\ref{eq:qft-product}, it is straightforward to see that $\ket{\phi_0}$ is simply a collection of $\ket{+}$ states.
Surprisingly, all but the last few qubits of $\ket{\phi_1}$ are exponentially close to the $\ket{+}$ state as well!
It is just the last $O(\log n)$ qubits that non-negligibly distinguish $\ket{\phi_0}$ from $\ket{\phi_1}$, so $\qft^\dagger$ must propagate the information from these qubits across the entire register to yield $\ket{000\cdots000}$ or $\ket{111\cdots111}$.
Thus via a light-cone argument, it is impossible to construct a circuit for the quantum Fourier transform that both has low depth and uses entirely local gates (in 1D).
The key to our optimistic QFT is to recognize that this long-range behavior is in some sense \emph{atypical}.
Indeed, we find that it \emph{is} possible to construct a low-depth, entirely local circuit which approximates the QFT well on the vast majority of basis states; we do so in the following.

Let us divide the $n$-qubit input register into blocks, each having $m = O(\log(n/\epsilon))$ qubits.
For a computational basis state $\ket{x}$, we denote the $m$-bit integer value on block $i$ as $X_i$, such that $x = \sum_i 2^{mi} X_i$.
Now, we can write a blockwise version of Equation~\ref{eq:qft-product}:
\begin{equation} \label{eq:block-exact}
	\ket{\Phi_x} = \bigotimes_i \left[ \sum_{Y_i = 0}^{2^m - 1} \omega^{\sum_{j=i}^{n/m}(X_j/2^{m(j-i)}) Y_i} \ket{Y_i} \right]
\end{equation}
where $\omega = e^{2\pi i /2^m}$.
The choice of our block size $m = O(\log n/\epsilon)$ is convenient because it allows us to approximate this state to within $\epsilon$ by simply dropping all but the first two terms of the sum in the exponent.
Denoting the $i^\mathrm{th}$ block of $\ket{\Phi_x}$ as $\left[\ket{\Phi_x}\right]_i$, we have
\begin{equation} \label{eq:block-approx}
	\left[\ket{\Phi_x}\right]_i \approx \sum_{Y_i = 0}^{2^m - 1} \omega^{(X_i + X_{i+1}/2^m) Y_i} \ket{Y_i}.
\end{equation}
Equation~\ref{eq:block-approx} suggests a ``blockwise'' version of the standard circuit for computing the approximate $\qft$, shown in Figure~\ref{fig:oqft-detail}(a,i): iterate through the blocks, applying a local $m$-bit $\qft$ to block $i$ (yielding the state $\sum_{Y_i} \omega^{X_i Y_i} \ket{Y_i}$) and then a phase rotation of $\omega^{X_{i+1} Y_i /2^m}$ between neighboring blocks $i$ and $i+1$.
This circuit has linear depth because the operations on block $i-1$ need access to $X_i$, so we cannot transform block $i$ of the input until block $i-1$ is complete.

In order to achieve low depth we must break this chain of dependencies, and we do so by attempting to estimate $X_i$ directly from $\left[\ket{\Phi_x}\right]_{i}$.
The study of quantum phase estimation (QPE) suggests that applying a length-$m$ inverse QFT to $\left[\ket{\Phi_x}\right]_i$ would be a promising strategy: we expect QPE to yield a superposition $\ket{\widetilde{X_i}}$ whose population is peaked on computational basis states $\ket{X_i'}$ having $X_i'$ close to the value of the phase factor $X_i + X_{i+1}/2^m$ in Equation~\ref{eq:block-approx} (see Nielsen \& Chuang, section 5.2.1~\cite{nielsen_quantum_2011}).
Thus, intuitively, we should be able to approximately apply the phase rotation $\omega^{X_i Y_{i-1} / 2^m}$ by using the state $\ket{\widetilde{X_i}}$, with only a small phase error.
Unfortunately, there is a subtle but critical problem: the distribution of population in $\ket{\widetilde{X_i}}$ wraps around modulo $2^m$~\cite{rall_faster_2021}.%
\footnote{Mathematically, for basis states $\ket{X_i'}$ with high weight, $|X_i'-X_i|_{2^m}$ is small. Here $|\cdot|_{2^m}$ is the Lee metric: $|z|_{2^m} = \min(z \bmod 2^m, -z \bmod 2^m)$ which is the definition of absolute value that is most sensible in this context.}
If $X_i$ is too close to 0 or $2^m$, a non-negligible portion of the distribution will be off by roughly $2^m$ and produce a large error in the phase (see Figure~\ref{fig:oqft-detail}(b,ii)).
The problem is not with our phase estimation procedure; it is simply impossible in general to estimate $X_i$ from only $\left[\ket{\Phi_x}\right]_i$ for all $x$.
This is where we can use the power of optimistic circuits.
We may simply \emph{allow} some states to have large error, observing that because the distribution in $\ket{X_i'}$ is highly peaked, for \emph{most} inputs the error from wraparound will be small, and thus the average error over basis states will be small.

\begin{algorithm}
	\caption{The optimistic quantum Fourier transform.}
	\label{alg:oqft}

	\KwIn{Quantum register with $n$ qubits; error parameter $\epsilon$}

	Let $m = O(\log n/\epsilon)$ be the block size.

	\begin{enumerate}
		\item Apply $\qft_{2^m}$ to the even-indexed blocks.
		\item Apply the phase rotation $\exp{(2\pi i X_{i+1} Y_i / 2^{2m})}$ \linebreak for all even $i$.
		\item Apply $\qft_{2^m}$ to the odd-indexed blocks, and apply $\qft_{2^m}^\dagger$ to the even-indexed blocks.
		\item Apply the phase rotation $\exp{(2\pi i X_{i+1} Y_i / 2^{2m})}$ \linebreak for all odd $i$.
		\item Apply $\qft_{2^m}$ to the even-indexed blocks.
	\end{enumerate}
\end{algorithm}

Following this intuition, we may construct the optimstic QFT circuit, shown in Figure~\ref{fig:oqft-detail}(a,ii) and Algorithm~\ref{alg:oqft}.
We first compute $\left[\ket{\Phi_x}\right]_i$ on blocks with even $i$, which is straightforward to do in parallel because the odd-indexed blocks remain in their initial state.
We then apply $\qft^\dagger$ to the even blocks, yielding $\Xp$ on those blocks {[}Fig.~\ref{fig:oqft-detail}(b,i){]}, which allows us to compute $\left[\ket{\Phi_x}\right]_i$ for odd $i$.
Importantly, when the phase rotation $\omega^{X_{i+1}' Y_i / 2^m}$ is a good approximation of $\omega^{X_{i+1} Y_i / 2^m}$ there is negligible phase kickback to $\Xp$ and the state on the blocks with even $i$ is approximately unchanged.
Thus a final $\qft$ applied to the even blocks returns them to $\left[\ket{\Phi_x}\right]_i$, yielding the desired approximation of $\Phi_x$ on the entire register.
In the following theorem we formalize that it is an optimistic circuit for the $\qft$ with error parameter $\epsilon$.
\begin{restatable}[Optimistic quantum Fourier transform]{theorem}{oqftthm}
	\label{thm:oqft}
	The circuit defined in Algorithm~\ref{alg:oqft} is an optimistic circuit with error parameter $\epsilon$ for $\qft_{2^n}$.
\end{restatable}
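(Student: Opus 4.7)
The plan is to verify Definition~\ref{def:opt-circuits} directly in the computational basis $\{\ket{x}\}$; by the equivalence with Definition~\ref{def:opt-circuits-basis-independent} this will suffice. I would trace the action of Algorithm~\ref{alg:oqft} on a computational basis input $\ket{x}$ step by step, recording the state after each of the five steps in closed form. After step 2, because the controlling odd blocks are still in a computational basis state, each even block sits in exactly the approximate-QFT block $\sum_{Y_i} \omega^{(X_i + X_{i+1}/2^m) Y_i}\ket{Y_i}$ of Equation~\ref{eq:block-approx}. The substantive content of the analysis lies in steps 3--5: after $\qft^\dagger_{2^m}$ on the even blocks one obtains phase-estimation states $\ket{\widetilde{X_i}} = \sum_{X'_i} \alpha_{X_i,X_{i+1}}(X'_i)\ket{X'_i}$ whose amplitudes $\alpha$ take the standard peaked QPE shape, and the subsequent phase rotations on odd blocks followed by $\qft_{2^m}$ on even blocks entangle and then approximately un-entangle these amplitudes.

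The second step is to write the resulting state $\oqft\ket{x}$ as a superposition, indexed by the even-block QPE outcomes $\{X'_{i+1}\}$, of ``ideal'' block-product states differing from the truncated target only by the substitution $X_{i+1} \to X'_{i+1}$ in the phase factors of neighboring odd blocks. Subtracting the target and bounding the resulting norm reduces the per-input squared error to an average, under $|\alpha_{X_i,X_{i+1}}(X'_i)|^2$, of the quantity $\sum_{Y_i} |\omega^{(X'_{i+1} - X_{i+1}) Y_i/2^m} - 1|^2$ summed over odd blocks. Using the standard peaked shape of the QPE amplitudes (Nielsen and Chuang, Section 5.2.1), this quantity is small whenever the Lee distance $|X'_{i+1} - X_{i+1}|_{2^m}$ is small, which is the case with high probability in the QPE distribution provided $X_{i+1}$ is not within $O(1)$ of the wraparound boundary at $0$ or $2^m$.

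The third and quantitatively central step is the averaging over all computational basis states $\ket{x}$ demanded by Definition~\ref{def:opt-circuits}. The fraction of values $X_{i+1} \in \{0,\dots,2^m-1\}$ for which wraparound contributes a non-negligible error is itself $O(1/2^m) = O(\epsilon/n)$, and on the remaining inputs the expected QPE contribution is controlled by the $1/e^2$ tail of the distribution; a careful computation shows that for $m = \Theta(\log(n/\epsilon))$ the total expected squared error, summed over the $O(n/m)$ blocks and combined with the input-independent $O(\epsilon)$ truncation error incurred by replacing Equation~\ref{eq:block-exact} with Equation~\ref{eq:block-approx}, remains $O(\epsilon)$. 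Absorbing constants into the choice of $m$ establishes $\|\oqft - \qft\|_F^2/2^n \leq \epsilon$ and hence the theorem.

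The main obstacle, I expect, is controlling the second moment of the Lee-metric error $|X'_{i+1} - X_{i+1}|_{2^m}$ under the joint distribution coming from QPE and the uniform average over $X_{i+1}$: the QPE tails fall off only as $1/e^2$ so naive second-moment estimates diverge, and the wraparound truncation must be handled carefully to get a meaningful bound. A secondary subtlety is tracking the phase kickback in step 4 onto the QPE states $\ket{\widetilde{X_{i+1}}}$: one must verify that whenever the applied phase is a good approximation, the kickback is negligible, so that the final $\qft_{2^m}$ in step 5 correctly returns each even block to the desired $[\ket{\Phi_x}]_{i+1}$ rather than introducing an additional source of error that compounds multiplicatively with the wraparound term.
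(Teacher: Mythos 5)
Your proposal is sound in substance and rests on the same core estimates as the paper's proof---the inverse-square tails of the QPE amplitudes from Nielsen and Chuang Section 5.2.1, the split of the phase-substitution error into a non-wraparound part (suppressed by the small phase factor $2\pi Y_{i-1}\Delta/2^{2m}$) and a wraparound part (suppressed by the amplitude tail), the separate accounting of the truncation error in passing from Equation~\ref{eq:block-exact} to Equation~\ref{eq:block-approx}, and the absorption of polylogarithmic losses into $m=O(\log(n/\epsilon))$---but it is organized quite differently. The paper never traces $\oqft\ket{x}$ end to end: it first bounds the Frobenius distance from the exact $\qft$ to a linear-depth blocked approximate QFT (Lemma~\ref{lem:aqft-error}), then obtains $\oqft$ from that circuit by inserting $\qft\cdot\qft^\dagger$ on each even block and moving one phase rotation past a $\qft$/phase/$\qft^\dagger$ triple; each such replacement is a local operation on $3m$ qubits whose Frobenius error is bounded in Lemma~\ref{lem:commutation-error}, and a hybrid/triangle-inequality composition over the $\lceil n/m\rceil$ replacements finishes the argument. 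That organization buys exactly the two things you flag as obstacles: since the final even-block $\qft$ is common to both circuits and the Frobenius norm is unitarily invariant, the distinction between ``wrong applied phase'' and ``kickback onto the QPE register'' dissolves (both are the single entangling phase $\omega^{(X_i'-X_i)Y_{i-1}/2^m}$ appearing in the local difference $W-W'$), and the cross-block accumulation is a one-line triangle inequality costing only a factor $(n/m)^2$, absorbed into the choice of $m$, rather than bookkeeping of the globally entangled state. Your route can be made to work, but two quantitative claims need repair: the wraparound error is not confined to an $O(1)$-sized set of bad $X_{i+1}$---it decays only as the inverse Lee distance to the boundary, so its average over $X_{i+1}$ is $O(m/2^m)$ rather than $O(1/2^m)$---and the non-wraparound second moment of $\Delta$ grows like $2^m$ rather than converging, and is rescued only by the $Y_{i-1}^2/2^{4m}$ prefactor; both are precisely the $O(m)$-type losses the paper absorbs by taking $m=\log(n^2/\epsilon)$.
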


\begin{proof}
	See Appendix~\ref{app:oqft-proof}.
\end{proof}

\begin{figure}[t]
	\begin{center}
		\includegraphics[width=0.7\linewidth]{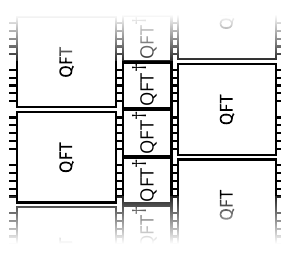}
	\end{center}
	\caption{An alternative way of expressing the optimistic QFT circuit, see Algorithm~\ref{alg:oqft-qft-only}. Equivalent to Figure~\ref{fig:oqft-detail}(a,ii) but written using only QFTs on blocks of $O(\log(n/\epsilon))$ qubits.}
	\label{fig:oqft-qft-only}
\end{figure}

\begin{algorithm}[t]
	\caption{The optimistic quantum Fourier transform (alternate form using only QFT blocks).}
	\label{alg:oqft-qft-only}

	\KwIn{Quantum register with $n$ qubits; error parameter $\epsilon$}

	Let $m = O(\log n/\epsilon)$ be the block size.

	\begin{enumerate}
		\item Apply $\qft_{2^{2m}}$ spanning blocks $i$ and $i+1$, for all even $i$.
		\item Apply $\qft_{2^m}^\dagger$ to all blocks.
		\item Apply $\qft_{2^{2m}}$ spanning blocks $i$ and $i+1$, for all odd $i$.
	\end{enumerate}
\end{algorithm}

We also present an alternative form of the optimistic QFT, where the phase rotation blocks have been subsumed into larger QFTs of size $2m$ (see Algorithm~\ref{alg:oqft-qft-only} and Figure~\ref{fig:oqft-qft-only}).
We explicity show the equivalence of the circuits in Appendix~\ref{app:oqft-qft-only}.
This form is convenient for concrete implementations because any exact QFT circuit can be used to implement the blocks and no other gates need to be specified.
It is also conceptually simpler, although if implemented naively it will include gates that immediately cancel each other out and thus should be removed.
Some small phase rotations in the $2m$-qubit QFTs can also be dropped while maintaining error $\epsilon$.

We now discuss the cost of the optimistic QFT, using the alternative formulation (Fig.~\ref{fig:oqft-qft-only} and Alg.~\ref{alg:oqft-qft-only}) because its analysis is simpler.
Implementing the QFT blocks using the standard, linear-depth exact QFT construction~\cite{nielsen_quantum_2011} yields circuit depth $O(m)=O(\log (n/\epsilon))$.
It requires no ancilla qubits, no measurements, and the gates have range at most $2m = O(\log (n/\epsilon))$ for qubits arranged in 1D.
The locality can be improved to nearest-neighbor without ancillas by the insertion of some swap gates inside the exact QFT blocks~\cite{fowler_implementation_2004}; the asymptotic gate count and depth of the circuit are maintained.%
\footnote{The construction of~\cite{fowler_implementation_2004} reverses the order of the qubits; the qubit order can be reversed back in depth $O(m)$ via swap gates.}

Before moving on, we note an interesting fact about the ``bad'' subspace on which our optimistic QFT exhibits large error.
Earlier it was shown via light-cone argument that achieving low error on all inputs requires either long-range interactions (perhaps via measurement and feedforward), or large depth.
The counterexamples $\ket{000\cdots000}$ and $\ket{111\cdots111}$ from that discussion, in which the QFT transmits information at long range, are precisely the type of states in which the phase estimation trick in our optimistic QFT breaks down!
More precisely, the computational basis states in which our construction has large error are those for which at least one even-indexed $m$-qubit block $\ket{X_i}$ starts with a long string of 0s or 1s (and thus $X_i$ is close to $0$ or $2^m$).
One may wonder whether these states are somehow important for the QFT's power as a building block for quantum algorithms.
In the next section, we show that this is not the case, by using the optimistic QFT to construct an optimistic circuit for integer multiplication and showing that that circuit is sufficient for quantum integer factorization via Shor's algorithm.

\section{Optimistic multiplication and
factoring}\label{sec:opt-qft-factoring}

Integer multiplication is a core operation in large array of quantum algorithms.
In this section, we consider the setting of Shor's algorithm for factoring~\cite{shor_polynomial-time_1997} using in-place modular multiplication of a quantum register by a classical constant $c$, where $c$ is coprime to the modulus $N$:
\begin{equation}
	\label{eq:mod-mult}
	\ket{y} \to \ket{cy \bmod N}  \,.
\end{equation}
It has been shown that this operation can be performed via four QFTs, in addition to other operations which use $2n + O(n/\log n)$ total qubits and have total depth $O(n^\varepsilon)$ for a tunable $\varepsilon > 0$ (note that this is distinct from the error parameter $\epsilon$ discussed elsewhere in this work, and the constant factors hidden by the asymptotic notation become larger as $\varepsilon$ becomes closer to zero)~\cite{kahanamoku-meyer_fast_2024}.
Surprisingly, the bottleneck is the QFTs: all previously-known QFT constructions would dominate either the depth or qubit count of this multiplier.
Our QFT constructions completely address this bottleneck, yielding a multiplier with depth $O(n^\varepsilon)$ and $2n + O(n/\log n)$ total qubits.
This depth and qubit count is immediate using our randomized approximate QFT from Section~\ref{sec:oqft-reduction}; the result is a factoring algorithm with depth $O(n^{1+\varepsilon})$ yet only $2n + O(n/\log n)$ total qubits, by far the best known depth for a factoring circuit using so few qubits.

However, it turns out that for Shor's algorithm the reduction from Section~\ref{sec:oqft-reduction} is not even necessary: as an example of the applicability of optimistic circuits, we show in Appendix~\ref{app:shor-oqft-proof} that the optimistic QFT can be used directly and the probability of successfully finding the factors will remain high.
Note that although the error would be small, our optimistic QFT should \emph{not} be used for the final QFT in the quantum phase estimation procedure: only for the multiplications.
By using the standard QFT construction for phase estimation, one qubit can be recycled for the entire $\ket{x}$ register, and such recycling is crucial for achieving qubit count $2n + o(n)$~\cite{beauregard_circuit_2003}.
Linear depth for that QFT is acceptable since the modular exponentiation requires linear depth already.

\section{Approximate QFT via reduction}
\label{sec:oqft-reduction}

We now apply the reduction described in Section~\ref{subsec:reduction} to the optimistic QFT, yielding approximate QFT circuits that have small error on arbitrary inputs.
We present two variations of the reduction, both maintaining depth $O(\log (n/\epsilon))$ which is asymptotically optimal (for measurement-free circuits~\cite{cleve_fast_2000}): a randomized construction using a sublinear number of ancillas, and a derandomized (unitary) one using $3n+o(n)$ ancillas.
The randomized construction seems to be the first approximate QFT to achieve this depth using a sublinear number of ancilla qubits; the purified construction similarly seems to be the first \textit{deterministic} circuit (no randomization, nor measurements and feed-forward) to achieve this depth using $3n + o(n)$ total qubits~\cite{cleve_fast_2000, hales_quantum_2002, baumer_approximate_2025}.

For the 1-design $d$ in the reduction (see Section~\ref{subsec:reduction}), we uniformly sample two random integers $r_1$ and $r_2$ between 0 and $2^n-1$, and then choose $V(r_1, r_2) = X_{2^n}^{r_1} Z_{2^n}^{r_2}$, where $X_{2^n}$ and $Z_{2^n}$ are the Weyl-Heisenberg generalization of the $X$ and $Z$ Pauli operators to a Hilbert space of dimension $2^n$.
Mathematically, $d$ is the uniform distribution over the Weyl-Heisenberg group, which is well known to form a 1-design~\cite{graydon_clifford_2021}.
For intuition, $X_{2^n}^{r_1} \ket{x} = \ket{(x+r_1) \bmod 2^n}$ corresponds to the addition of the integer $r_1$ into the register, and $Z_{2^n}^{r_2}$ corresponds to a ``phase gradient'' rotation $Z_{2^n}^{r_2} \ket{x} = e^{2\pi i r_2 x / 2^n} \ket{x}$.
Altogether, the action of $V(r_1, r_2)$ on a computational basis state is
\begin{equation}
	V(r_1, r_2) \ket{x} = e^{2\pi i r_2 x / 2^n} \ket{(x + r_1) \bmod 2^n}.
	\label{eq:weyl-pauli}
\end{equation}
In the case of the optimistic QFT, this choice of 1-design is particularly nice because the unitary $\hat{V}^\dagger = \qft \cdot V^\dagger \cdot \qft^\dagger$ (see Section~\ref{subsec:reduction}) is essentially the same operator, with the arguments exchanged:
\begin{equation}
	\hat{V}^\dagger (r_1, r_2) = V(r_2, -r_1)
	\label{eq:weyl-conj}
\end{equation}
(This is a direct result of the fact that $\qft$ is the Weyl-Heisenberg generalization of the Hadamard gate).
Applying Theorem~\ref{thm:randomized-reduction}, the optimistic QFT together with this 1-design yields a randomized approximate QFT with expected error $\epsilon$ for any input state, with the expectation taken over the randomness in the 1-design.

In the abstract circuit model, the phase rotation $Z_{2^n}^{r_2}$ can be implemented via one layer of single-qubit phase rotations.
Thus only the implementation of the integer addition $X_{2^n}^{r_1}$ may contribute non-negligibly to the overall circuit cost.
We formalize this discussion in the following theorem:

\begin{theorem}
    [The Randomized Approximate $\qft$]
	\label{cor:rand-qft}
	There exists a randomized circuit for the approximate quantum Fourier transform on $n$ qubits with gate cost $O(n \log (n/\epsilon))$, depth $O(\log (n/\epsilon))$, and ancilla qubit count $O(n / \log (n/\epsilon))$, where $\epsilon$ is the expectation value of the error on an arbitrary input state.
\end{theorem}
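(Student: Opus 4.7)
The plan is to directly assemble the statement from Theorems~\ref{thm:oqft} and~\ref{thm:randomized-reduction} together with the Weyl--Heisenberg 1-design $V(r_1,r_2) = X_{2^n}^{r_1} Z_{2^n}^{r_2}$ introduced just above the theorem. By Eq.~\eqref{eq:weyl-conj} the conjugated unitary $\hat{V}^\dagger$ has the same structural form $X^a Z^b$, so the complete randomized circuit is
\begin{equation*}
V(r_2,-r_1) \cdot \oqft \cdot V(r_1,r_2),
\end{equation*}
with $r_1,r_2$ sampled uniformly from $\{0,\ldots,2^n-1\}$. The expected error bound $\epsilon$ on arbitrary inputs follows immediately from Theorem~\ref{thm:randomized-reduction}, since the Weyl--Heisenberg group is a 1-design and $\oqft$ is an optimistic circuit for $\qft_{2^n}$ with error $\epsilon$ by Theorem~\ref{thm:oqft}.

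What remains is a resource accounting of the three building blocks. The middle factor $\oqft$ contributes depth $O(\log(n/\epsilon))$, gate count $O(n\log(n/\epsilon))$, and zero ancillas (Section~\ref{sec:opt-qft}). Each phase-gradient $Z_{2^n}^{r_2}$, because $r_2$ is a classically known integer, is a single parallel layer of $n$ single-qubit $R_z$ rotations whose angles are read off from the binary expansion of $r_2$ — depth $1$, $O(n)$ gates, no ancillas. Only the in-place integer addition $X_{2^n}^{r_1}$ must be implemented with care so as not to inflate either the depth or the ancilla budget.

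The main obstacle is therefore to exhibit an in-place classical-constant adder modulo $2^n$ that simultaneously achieves depth $O(\log(n/\epsilon))$ and uses only $O(n/\log(n/\epsilon))$ ancillas. My plan is a blockwise carry-lookahead construction matched to the $\oqft$ block size. Partition the register into $\Theta(n/m)$ blocks of $m = \Theta(\log(n/\epsilon))$ qubits, and allocate one ancilla per block to hold its incoming carry bit; this uses $O(n/m) = O(n/\log(n/\epsilon))$ ancillas. The per-block ``propagate'' and ``generate'' signals depend only on the $m$ qubits of that block and the (classically known) constant, so all inter-block carries can be produced in parallel by a standard prefix tree of depth $O(\log(n/m))$ and $O(n/m)$ gates. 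Within each block, conditioned on its carry-in, the local constant addition can be carried out ancilla-free by a ripple circuit in depth $O(m)$ and $O(m)$ gates. The total is depth $O(m + \log(n/m)) = O(\log(n/\epsilon))$ and gate count $O(n)$, which is dominated by the $\oqft$ budget.

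Adding up the three layers gives the claimed depth $O(\log(n/\epsilon))$, gate count $O(n\log(n/\epsilon))$, and ancilla count $O(n/\log(n/\epsilon))$, completing the theorem. The only subtlety I anticipate is verifying that the carry-lookahead adder can be cleanly uncomputed on its ancilla register so that the $O(n/\log(n/\epsilon))$ ancillas are returned in the $\ket{0}$ state — standard Bennett-style uncomputation of the prefix tree handles this within the same depth and gate budget.
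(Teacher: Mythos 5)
Your overall route is exactly the paper's: instantiate the reduction of Theorem~\ref{thm:randomized-reduction} with the Weyl--Heisenberg 1-design and the optimistic QFT of Theorem~\ref{thm:oqft}, observe that $Z_{2^n}^{r_2}$ is a single layer of $R_z$ gates, and reduce everything to implementing the classical-constant addition $X_{2^n}^{r_1}$ in depth $O(\log(n/\epsilon))$ with $O(n/\log(n/\epsilon))$ ancillas. The only divergence is that the paper simply cites the classical-quantum adder of~\cite{takahashi_fast_2008} run with block size $m=O(\log(n/\epsilon))$, whereas you hand-roll a blocked carry-lookahead adder; that is a legitimate alternative and your resource accounting (one or two carry/propagate ancillas per block, per-block signal computation and conditioned ripple addition in depth $O(m)$, inter-block prefix tree in depth $O(\log(n/m))$, gate count comfortably inside $O(n\log(n/\epsilon))$) is sound. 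One correction to your final step, though: after the in-place per-block additions the original input $x$ has been overwritten, so ``Bennett-style'' reversal of the forward carry computation is not available --- you cannot rerun the prefix tree on data that no longer exists. The standard fix, and what the cited adders actually do, is to uncompute each block's carry-in from the \emph{output}: since $r_1$ is classically known, the carry into block $i$ equals the indicator $\left[\,(x+r_1)\bmod 2^{mi} < r_1 \bmod 2^{mi}\,\right]$, so a second (reversed) comparison/prefix pass against the constant clears the ancillas within the same depth and gate budget. With that substitution your construction goes through and matches the theorem's stated costs.
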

\begin{proof}
    We instantiate the circuit described in the previous paragraphs, using the classical-quantum adder of~\cite{takahashi_fast_2008}
    with block size $m = O(\log(n/\epsilon))$ for the integer addition $X_{2^n}^{r_1}.$
    \footnote{We note that the adder of~\cite{takahashi_fast_2008} makes use of the QFT adder~\cite{draper_addition_2000} as a subroutine, which requires many rotation gates. It is possible to instead use the linear-depth classical-quantum adder of~\cite{haner_factoring_2017}, resulting in full, log-depth classical-quantum adder that only uses classically reversible gates. Furthermore, using the optimized circuits for Toffoli ladders from~\cite{remaud_ancilla-free_2025}, it is possible to reduce the ancilla count to $O \left (n/2^{\sqrt{\log (n/\epsilon)}}\right)$ by picking a block size of $m = 2^{\sqrt{\log (n/\epsilon)}}$.}
\end{proof}

Finally, we may apply the purification of Theorem~\ref{thm:purification}.
The qubit cost of the purification is large enough that the cost of the adders in the application of $V$ can be made negligible, yielding Theorem~\ref{thm:purified-approx-qft}.

\begin{restatable}[The Unitary Approximate $\qft$]{theorem}
{paqft}
	\label{thm:purified-approx-qft}
	There exists a unitary circuit for the approximate quantum Fourier transform on $n$ qubits with error $\epsilon$, with gate count $O(n \log (n/\epsilon))$, depth $O(\log (n/\epsilon))$, and total qubit count $3n+O(n/\log (n/\epsilon))$.
\end{restatable}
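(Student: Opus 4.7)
The plan is to directly instantiate the derandomized reduction of Theorem~\ref{thm:purification}, applied to the optimistic QFT (Theorem~\ref{thm:oqft}) using the Weyl-Heisenberg 1-design $V(r_1,r_2) = X_{2^n}^{r_1} Z_{2^n}^{r_2}$ introduced in the preceding discussion. The circuit has an $n$-qubit data register initialized to $\ket{\psi}$, a $2n$-qubit control register holding $(r_1, r_2)$, and a shared workspace of $O(n/\log (n/\epsilon))$ ancilla qubits that can be reused between subcircuits. The control register is prepared in its uniform superposition by $2n$ parallel Hadamards, after which Theorem~\ref{thm:purification} immediately bounds the squared error between the data-register output and $\qft\ket{\psi}$ by $\epsilon$.

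The remaining work is to implement the controlled randomization $V' = \sum_{r_1,r_2} \ket{r_1,r_2}\bra{r_1, r_2} \otimes V(r_1, r_2)$ and its conjugate $\hat{V}'^\dagger$ within the claimed budget. I would factor $V'$ into a quantum-quantum modular addition of $r_1$ into the data register, followed by a quantum-quantum phase gradient applying $e^{2\pi i r_2 x / 2^n}$. The addition is realized in depth $O(\log (n/\epsilon))$ using $O(n/\log (n/\epsilon))$ workspace via a standard log-depth adder~\cite{takahashi_fast_2008}. The phase gradient decomposes into $n^2$ controlled-phase gates between individual bits of the $r_2$ and data registers with angles $2\pi \cdot 2^{j+k-n}$; truncating the rotations of angle below $O(\epsilon/n)$ (costing only $O(\epsilon)$ additional error that composes additively with the $\epsilon$ from Theorem~\ref{thm:purification}) leaves only $O(n\log (n/\epsilon))$ rotations, which can be scheduled into depth $O(\log (n/\epsilon))$ by parallelizing layers indexed by $j+k$. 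By Equation~\eqref{eq:weyl-conj}, $\hat{V}'^\dagger$ has the same form with $(r_1,r_2)\to (r_2,-r_1)$ and costs the same.

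Summing the pieces gives the stated totals: the optimistic QFT contributes depth $O(\log (n/\epsilon))$, gate count $O(n\log (n/\epsilon))$, and zero ancillas; each of $V'$ and $\hat{V}'^\dagger$ contributes the same depth and gate count plus the workspace above; and the control register adds the remaining $2n$ qubits, giving $n + 2n + O(n/\log (n/\epsilon)) = 3n + O(n/\log (n/\epsilon))$ total qubits.

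I expect the main obstacle to be fitting $V'$ into depth $O(\log (n/\epsilon))$ with only a sublinear number of workspace ancillas. In particular the quantum-quantum phase gradient is nontrivial: unlike its classical-quantum counterpart from the randomized circuit of Theorem~\ref{cor:rand-qft}, here both the control and target bits of each pairwise rotation are quantum, so the $n^2$ rotations must be carefully scheduled and the small-angle truncation must be absorbed into the $\epsilon$ error budget without inflating it. Once this scheduling and error accounting is done, the remaining resource counts add up routinely.
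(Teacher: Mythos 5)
Your proposal is correct and follows essentially the same route as the paper: instantiate the derandomized reduction of Theorem~\ref{thm:purification} with the Weyl--Heisenberg 1-design, implement the controlled $X_{2^n}^{r_1}$ as a log-depth quantum--quantum addition and the controlled $Z_{2^n}^{r_2}$ as a truncated, anti-diagonally scheduled phase gradient (your scheduling and truncation argument is exactly the paper's Lemma~\ref{lem:approxpg}), and sum the register sizes to get $3n+O(n/\log(n/\epsilon))$. The only minor differences are that the paper cites the quantum--quantum adder of~\cite{takahashi_quantum_2009} (not the classical--quantum adder of~\cite{takahashi_fast_2008}, since $r_1$ is here in superposition) and splits the error budget more carefully at the norm level ($\sqrt{\epsilon}/3$ per piece) rather than adding an operator-norm truncation error to the squared-norm bound of Theorem~\ref{thm:purification}, though your choice of cutoff makes the conclusion hold up to constants anyway.
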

\begin{proof}
    This follows from implementing the controlled Weyl-Heisenberg operators using the quantum-quantum adder of~\cite{takahashi_quantum_2009} for $X_{2^m}^{r_1}$, and Lemma~\ref{lem:approxpg} (below) for $Z_{2^m}^{r_2}$. We defer a more detailed description to Appendix~\ref{sec:sec-5-pfs}.
\end{proof}

\begin{restatable}
	[Cost of approximate controlled phase gradients]{lemma}
	{approxpg}
	\label{lem:approxpg}
	There is a quantum circuit of depth $O(\log(n/\delta))$ that approximates the operation
	\[\ket{x}\ket{z} \mapsto e^{2\pi i x z / 2^n} \ket{x}\ket{z}\]
	to error $\delta$ in the operator norm.
\end{restatable}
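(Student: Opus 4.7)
The plan is to expand the target unitary as a product of two-qubit controlled-phase gates, drop the ones that either contribute nothing or fall below precision $\delta$, and apply what remains in $O(\log(n/\delta))$ depth via a diagonal-by-diagonal parallelization argument.

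I would first write
\[
e^{2\pi i xz/2^n} \;=\; \prod_{i,j=0}^{n-1} e^{2\pi i\, x_i z_j\, 2^{i+j-n}},
\]
so the operation decomposes as a product of $n^2$ two-qubit $\mathrm{CP}_{2\pi \cdot 2^{i+j-n}}$ gates between $x_i$ and $z_j$. Pairs with $i+j \geq n$ have an angle that is an integer multiple of $2\pi$ and thus act as the identity, so they can be discarded for free. I would then further truncate all pairs with $i+j < n-m$ for $m = O(\log(n/\delta))$.

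To bound the truncation error, I would note that the discarded pairs together form a diagonal unitary whose basis-state phase is
\[
\Theta(x,z) \;=\; 2\pi\!\!\sum_{i+j<n-m}\! x_i z_j\, 2^{i+j-n},
\]
and grouping by $s = i+j$ gives
\[
|\Theta(x,z)| \;\leq\; 2\pi \sum_{s=0}^{n-m-1}(s+1)\, 2^{s-n} \;=\; O(n\cdot 2^{-m}).
\]
Choosing $m = \lceil \log(Cn/\delta)\rceil$ for a suitable constant and using $|e^{i\alpha}-e^{i\beta}| \leq |\alpha-\beta|$ bounds the operator-norm error by $\delta$, since the approximating circuit differs from the exact phase gradient by a diagonal unitary with uniformly small phase.

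For the depth claim, I would group the surviving gates (those with $n-m \leq i+j < n$) by $k = n-i-j \in \{1,\ldots,m\}$. For each fixed $k$ the surviving gates act on the pairs $\{(x_i, z_{n-k-i}) : 0 \leq i \leq n-k\}$, which are mutually disjoint and lie across the two disjoint registers, so all $O(n)$ gates in a given diagonal commit to a single circuit layer. Summing over the $m$ diagonals yields depth $O(m) = O(\log(n/\delta))$ and total gate count $O(nm) = O(n\log(n/\delta))$, with no ancillas. The main subtlety is the error accounting: one must ensure that the bound on the diagonal entries translates cleanly to operator-norm error, and that the omitted small-angle gates do not accumulate significantly---this is what forces the truncation threshold $m$ to scale as $\log(n/\delta)$ rather than merely $\log(1/\delta)$. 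Everything else is structural and essentially routine once the above factorization is in hand.
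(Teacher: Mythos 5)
Your proposal is correct and follows essentially the same route as the paper's proof: decompose the phase gradient into $O(n^2)$ two-qubit controlled-phase gates, drop those with angle below the $2^{-m}$ threshold for $m = O(\log(n/\delta))$, and schedule the surviving gates along diagonals of disjoint qubit pairs to obtain depth $O(m)$ with $O(nm)$ gates. The only (immaterial) difference is in the error accounting: you bound the maximum phase of the combined dropped diagonal, while the paper sums the operator-norm contributions $|1-e^{i\alpha}|\leq|\alpha|$ of the individually dropped gates; both give the same $O(n2^{-m})$ bound.
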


\begin{proof}
	See Appendix~\ref{sec:sec-5-pfs}.
\end{proof}

Before concluding, we note an interesting connection to existing work on \emph{verification} of the approximate quantum Fourier transform.
It was previously shown that given black-box access to an approximate quantum Fourier transform with unknown error characteristics, it is possible to verify that the black box is sufficient for quantum phase estimation with only a polynomial number of calls to the black box~\cite{linden_average-case_2022}.
The strategy is to first verify that the error is well behaved on random inputs, and then when doing phase estimation apply a phase rotation $Z_{2^m}^r$ for some random $r$, and then classically subtract the value $r$ from the phase estimation result.
This corresponds to a special case of our reduction: if the output state will be measured immediately after the application of the QFT, the application of $\hat{V}^\dagger$ can be replaced with a classical subtraction.
Furthermore, the $X_{2^m}^{r_1}$ part of $V$ only results in a phase on the output states, which doesn't affect the results of the subsequent measurement, and thus in this context $X_{2^m}^{r_1}$ does not need to be performed, leaving only $Z_{2^m}^{r_2}$ as in the verification protocol.
The fact that the error of the black box is well-behaved on random inputs can be thought of as verifying that the black box performs an optimistic QFT; indeed, our optimistic QFT circuit can be thought of as \emph{instantiating} exactly such a QFT.

\section{Discussion and outlook}\label{sec-discussion}

Standard techniques from classical computing do not always map well onto quantum computing.
This is true for classical amortized algorithms, which have better average- than worst-case complexity but no obvious mapping to quantum circuits.
On the other hand, quantum computing provides us with new and different ways of solving algorithmic problems, which we can leverage to get around such obstables.
In this work we introduce a new way of designing fast quantum circuits, via the use of circuits that approximate a desired unitary well on most, but not all, input states.
We demonstrate the power of our technique through the construction of a circuit for the approximate quantum Fourier transform.

Conceptually, we build on a long history of insight  \cite{mitzenmacher_probability_2005, motwani_randomized_1995} gained from randomized algorithms for classical computers.  Randomization is now being recognized as a powerful concept for quantum algorithms as well, including acceleration of quantum simulation \cite{campbell_shorter_2017,campbell_random_2019,nakaji_high-order_2024,childs_faster_2019,zhao_hamiltonian_2022,chen_average-case_2024,chen_concentration_2021} and even quantum phase estimation \cite{linden_average-case_2022}, quantum signal processing, and quantum singular value transformations \cite{martyn_halving_2025}.  The potential for randomized quantum algorithms is especially intriguing because coherent errors can add up differently from incoherent ones, and crossing back and forth across quantum and classical boundaries can open new opportunities for algorithmic cost reductions.   Mathematically, it is useful to keep in mind regarding quantum states and operators that the expected error of an approximate unitary applied to a random state is proportional to the Frobenius norm of the difference.  Optimistic quantum circuits are one direction in which to utilize these ideas.

Considering the wealth of classical algorithms having better average- than worst-case complexity, we believe that there may be a wide range of quantum circuits that can be optimized with the technique of optimistic quantum circuits.
One case in which its application may be particularly straightforward is in the design of ``semiclassical'' circuits, in which a classical function is applied to a superposition of inputs.
One example is the optimistic adder of Section~\ref{subsec:opt-adder}.
Another example is the computation of modular inverses, which are commonly used in quantum algorithms for cryptography and number theory problems~\cite{proos_shors_2003, roetteler_quantum_2017, haner_improved_2020}.
The standard classical algorithms for this problem (the Euclidean algorithm and binary GCD) have the frustrating property that their runtime depends on the inputs in an unpredictable way.
Optimistic quantum circuits may provide a way of achieving good performance for these algorithms, by simply accepting an incorrect answer on the inputs on which the algorithm does not terminate quickly!

A number of open questions stem from our optimistic QFT construction as well.
In this work we do not extensively explore the circuit's practical implementation; there are many circuit optimizations that can be applied, such as using phase gradient states to implement the local QFTs on blocks of $m$ qubits~\cite{kitaev_classical_2002, gidney_turning_2016, gidney_halving_2018, nam_approximate_2020}.
The parameters can also be tuned, for example the block size could differ between the odd and even blocks because they contribute to the error in different ways.
It would also be interesting to explore how the locality of the optimistic QFT circuit can be used to reduce routing cost in a practical factoring circuit.
Finally, an exciting direction for future research is the study of how optimistic circuits can inform our understanding of the fundamental characteristics of certain unitaries, and vice versa.
For example, the existence of a logarithmic-depth, logarithmically-local optimistic QFT is intimately related to the fact that the exact QFT has low entanglement \cite{chen_quantum_2023}; it would be interesting to explore whether optimistic circuits can teach us similar lessons about other unitaries.

~\\
\noindent
{\sf\large Contributions and acknowledgements}
~\\

\noindent
GKM created the optimistic QFT circuit;
CG pointed out all log-depth approximations of the QFT using only local gates must fail badly on certain inputs.
All authors contributed ideas to the construction of the worst-to-average-case reduction techniques; JB and TB devised the proofs of the error bounds for the reductions.
GKM wrote the first draft of the manuscript; all authors contributed to editing, improving, and polishing the manuscript.

GKM and ILC were supported in part by the Co-design Center for Quantum Advantage, funded by the Department of Energy as part of the National Quantum Initiative.  GKM also acknowledges support in part by the generosity of Google.
JB acknowledges support of the Doc Bedard Fellowship of the Laboratory for Physical Sciences. This work was completed in part while TB was a student researcher at Google.
We thank John Martyn for useful discussions regarding error bounds for randomized circuits, and Thomas Schuster for insights regarding $t$-designs.

\onecolumn

\bibliography{references}

\clearpage

\appendix{}

\section*{Appendix}\label{appendix}

\section{Alternative optimistic QFT construction}
\label{app:oqft-qft-only}

Here we show that the alternative construction of the optimistic QFT (see Algorithm~\ref{alg:oqft-qft-only} and Figure~\ref{fig:oqft-qft-only}), which is constructed entirely of smaller QFTs, is equivalent to the circuit presented in Algorithm~\ref{alg:oqft} and Figure~\ref{fig:oqft-detail}(a, ii).
The key idea is that a $2m$-qubit QFT is equivalent to a QFT on the first block of $m$ qubits, a phase rotation between the two blocks, and then an $m$-qubit QFT on the second block.
The transformation is depicted in Figure~\ref{fig:oqft-qft-only-proof}.

\begin{figure}[h]
	\begin{center}
		\includegraphics[width=\textwidth]{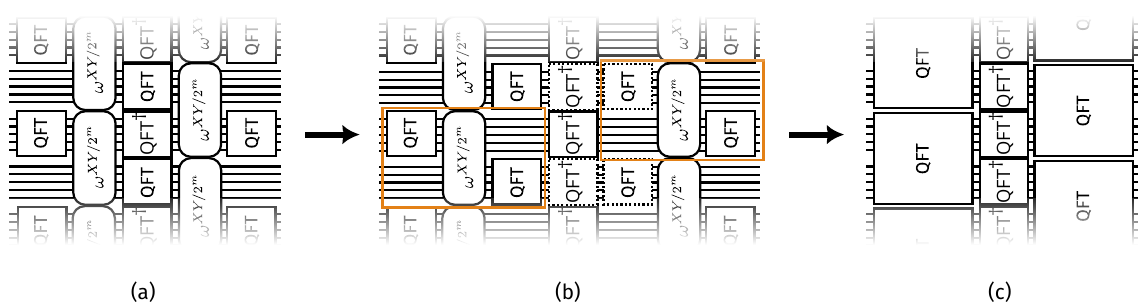}
	\end{center}
	\caption{Equivalence of the alternative optimistic QFT circuit (Alg.~\ref{alg:oqft-qft-only}) with the original circuit (Alg.~\ref{alg:oqft}). \textbf{(a)} The original optimistic QFT circuit. \textbf{(b)} Inserting the identity $\qft \cdot \qft^\dagger$ (dotted borders), the phase rotation blocks can be grouped with neighboring $m$-qubit QFT blocks (orange boxes) to form $2m$-qubit QFT blocks. \textbf{(c)} The alternative form of the optimistic QFT circuit.}
	\label{fig:oqft-qft-only-proof}
\end{figure}

\section{Proof of optimistic QFT}
\label{app:oqft-proof}

Here we provide a proof of Theorem~\ref{thm:oqft}, restated here:

\oqftthm*{}

\begin{proof}
Following Definition~\ref{def:opt-circuits-basis-independent}, we desire to show that
\begin{equation}
	\frac{1}{2^n} ||\oqft - \qft||_F^2 \leq \epsilon
\end{equation}

Consider the linear-depth approximate QFT on block size $m$ described in Appendix~\ref{app:aqft}.
By Lemma~\ref{lem:aqft-error}, it induces a unitary $\qft'$ for which
\begin{equation}
	\frac{1}{2^n} ||\qft' - \qft||_F^2 \leq \frac{4 \pi^2}{3} \cdot \left\lceil \frac{n}{m} \right\rceil \cdot \frac{1}{2^{m}}
\end{equation}
Thus we proceed by bounding $||\oqft - \qft'||_F^2$, which we may use to prove the theorem via the triangle inequality.
In particular, we show that the linear-depth approximate QFT can be transformed into the optimistic QFT by commuting certain gates past each other, and then in Lemma~\ref{lem:commutation-error} show that doing so maintains a small Frobenius norm.

The transformation is depicted graphically in Figure~\ref{fig:oqft-detail}.
Starting with the linear-depth approximate QFT (see Appendix~\ref{app:aqft}), we first add a resolution of the identity of the form $\qft \cdot \qft^\dagger$ to the end of the circuit on each even-indexed block.
Now, on every even-indexed block $i$ is the sequence of gates depicted in Figure~\ref{fig:commutation-lemma}(a): a phase rotation between blocks $i-1$ and $i$, $\qft$ on block $i$, a phase rotation between blocks $i$ and $i+1$, and finally $\qft^\dagger$ on block $i$.
We denote this block of four gates as $W$.
The key transformation is to move the first phase rotation past the other gates to the end of the sequence, producing a new sequence $W'$ (Fig.~\ref{fig:commutation-lemma}(b)).

Let $\oqft_j$ be the ``intermediate'' circuit in which $j$ of the blocks $W$ in the linear-depth QFT have been replaced with $W'$.
Then Lemma~\ref{lem:commutation-error} implies that
\begin{equation}
    \frac{1}{2^n} ||\oqft_i - \oqft_{i-1}||_F^2 \leq \frac{\Omega(m)}{2^{m}}
\end{equation}
Then, by triangle inequality, making all $\lceil n/m\rceil$ replacements yields an error
\begin{equation}
    \frac{1}{2^n} ||\oqft - \qft'||_F^2 \leq \frac{n^2}{m^2} \cdot \frac{\Omega(m)}{2^{m}} = \frac{\Omega(n^2)}{m 2^{m}}
\end{equation}
Letting $m=\log(n^2/\epsilon) = O(\log(n/\epsilon))$ yields
\begin{equation}
    \frac{1}{2^n} ||\oqft - \qft'||_F^2 \leq \epsilon
\end{equation}
which is what we desired to show.

\end{proof}

\begin{figure}
    \begin{center}
        \includegraphics{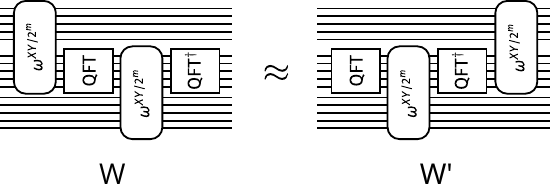}
    \end{center}
	\caption{The low-level ``optimistic commutation'' used to convert a linear-depth approximate QFT into the optimistic QFT (see Figure~\ref{fig:oqft-detail}).
    This transformation is performed on every even-indexed block of $m$ qubits; by Lemma~\ref{lem:commutation-error}, the error (measured in Frobenius norm) of making this replacement is small.
    Here, $\omega = \exp{(2\pi i /2^m)}$ and $X$ and $Y$ are the integer values of the computational basis states of the two blocks to which the gate is applied.}
	\label{fig:commutation-lemma}
\end{figure}

\begin{lemma}[Optimistic commutation of phase gates]
    For the operations $W$ and $W'$ depicted in Figure~\ref{fig:commutation-lemma}, applied $3m$ qubits, we have
    \begin{equation}
        ||W-W'||^2_F \leq 2^{2m} \cdot \Omega(m)
    \end{equation}
	\label{lem:commutation-error}
\end{lemma}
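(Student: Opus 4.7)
The plan is to first compute $W$ and $W'$ explicitly on a computational basis state $\ket{X_{i-1},X_i,X_{i+1}}$ of the three blocks. The $\qft\cdot\text{phase}_{i,i+1}\cdot\qft^\dagger$ ``sandwich'' on block $i$, common to both operators, implements a quantum-phase-estimation (QPE)-like map with amplitudes $\alpha_{Z_i}(X_i,X_{i+1}) = \frac{1}{2^m}\sum_Y \omega^{(X_i + X_{i+1}/2^m - Z_i)Y}$; let $p(Z_i\mid X_i,X_{i+1}) := |\alpha_{Z_i}|^2$ denote the corresponding QPE probability. The only difference between $W$ and $W'$ is where the inter-block phase rotation $\exp(2\pi i X_{i-1}\cdot(\cdot)/2^{2m})$ lands: in $W$ it precedes the sandwich and contributes an overall phase $\exp(2\pi i X_{i-1}X_i/2^{2m})$, while in $W'$ it follows and contributes a per-component phase $\exp(2\pi i X_{i-1}Z_i/2^{2m})$. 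Therefore
\begin{equation*}
	\bigl\|(W-W')\ket{X_{i-1},X_i,X_{i+1}}\bigr\|^2 = \sum_{Z_i} p(Z_i\mid X_i,X_{i+1})\cdot 4\sin^2\bigl(\pi X_{i-1}(X_i-Z_i)/2^{2m}\bigr).
\end{equation*}

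Next I would sum over all basis states, apply $\sin^2(\theta)\leq\theta^2$, and use $\sum_{X_{i-1}=0}^{2^m-1} X_{i-1}^2 \leq 2^{3m}/3$ to reduce the claim to proving
\begin{equation*}
	\Sigma := \sum_{X_i,X_{i+1},Z_i} p(Z_i\mid X_i,X_{i+1})\,(X_i - Z_i)^2 = O(2^{3m}\cdot m),
\end{equation*}
since then $\|W-W'\|_F^2 \leq (4\pi^2/(3\cdot 2^m))\cdot\Sigma = O(2^{2m}\cdot m)$. To evaluate $\Sigma$, the key trick is a change of variables $u = X_i - Z_i$: there are exactly $2^m - |u|$ pairs $(X_i,Z_i)$ with $X_i - Z_i = u$ for $u \in \{-(2^m-1),\ldots,2^m-1\}$, and $p$ is $2^m$-periodic in $u$, so the positive and negative halves combine cleanly to
\begin{equation*}
	\sum_{X_i,Z_i} p(Z_i\mid X_i,X_{i+1})(X_i-Z_i)^2 = 2^m \sum_{u=1}^{2^m-1} p(u,X_{i+1})\,u(2^m - u).
\end{equation*}

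Finally I would evaluate the right-hand sum using the explicit QPE form $p(u,X_{i+1}) = \sin^2(\pi\delta)/[2^{2m}\sin^2(\pi(u+\delta)/2^m)]$ with $\delta = X_{i+1}/2^m$. Near the endpoints $u = O(1)$ or $2^m - u = O(1)$, the denominator reduces to $(\pi u/2^m)^2$ or $(\pi(2^m-u)/2^m)^2$, so $p(u)\cdot u(2^m - u) \sim \sin^2(\pi\delta)\cdot 2^m / (\pi^2\min(u, 2^m - u))$, while the middle region $u\sim 2^{m-1}$ contributes only $O(\sin^2(\pi\delta))$ per term. Summing the endpoint contributions yields a harmonic $2\sum_{u=1}^{2^{m-1}} 1/u = O(m)$, and integrating over $X_{i+1}$ via $\sum_{X_{i+1}}\sin^2(\pi X_{i+1}/2^m) = 2^{m-1}$ then gives $\Sigma = O(2^{3m}\cdot m)$ as needed.

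The main obstacle will be precisely this last step. Large integer differences $|X_i - Z_i|\sim 2^m$, which occur when $X_i$ is near a block boundary (so that one of the two QPE peaks wraps around), correspond to exactly the ``bad'' states on which the optimistic QFT fails, and naively would contribute $\Omega(2^{2m})$ per such $X_i$ to $\|W-W'\|_F^2$. The cancellation between these large differences and the $1/u^2$ tail of the QPE distribution -- leaving only a logarithmic $H_{2^{m-1}} = O(m)$ overhead -- is what makes the ``optimistic commutation'' possible. The pairing identity above is essential because it bundles the wrapped and non-wrapped contributions with matching amplitudes, exposing this cancellation without overcounting.
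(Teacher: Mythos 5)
Your proposal is correct, and its first half coincides with the paper's argument: both reduce $\|W-W'\|_F^2$ to a sum over computational basis states of QPE-weighted phase discrepancies $\sum_{Z_i} p(Z_i)\,|1-e^{2\pi i X_{i-1}(X_i-Z_i)/2^{2m}}|^2$, bound the phase factor by its argument, and pull out $\sum_{X_{i-1}} X_{i-1}^2 \le 2^{3m}/3$. Where you diverge is in how the remaining sum $\Sigma$ is controlled. The paper fixes $X_i$, splits the $Z_i$-sum into a non-wrapped (``close'') and wrapped (``far'') part, and invokes the textbook phase-estimation bounds $|\alpha_{\Delta+X_i}|\le \min\left(1/(2|\Delta-X_{i+1}/2^m|),1\right)$ and $\sum_{\Delta>X_i^*}|\alpha|^2 < 1/(2(X_i^*-1))$, worst-casing over $X_{i+1}$; harmonic sums over $X_i^*=\min(X_i,2^m-X_i)$ then produce the $O(m)$ factor. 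You instead use the exact counting identity from the change of variables $u=X_i-Z_i$ combined with the $2^m$-periodicity of $p$ in $u$, collapsing the double sum to $2^m\sum_{u=1}^{2^m-1}p(u)\,u(2^m-u)$, and evaluate it with the closed-form Dirichlet-kernel expression $p(u)=\sin^2(\pi\delta)/(2^{2m}\sin^2(\pi(u+\delta)/2^m))$ and the exact average $\sum_{X_{i+1}}\sin^2(\pi X_{i+1}/2^m)=2^{m-1}$. Both routes give $\Sigma=O(2^{3m}m)$ and hence $\|W-W'\|_F^2=O(2^{2m}m)$, which is what the lemma asserts. Your bookkeeping buys two things: the weight $u(2^m-u)$ treats wrapped and non-wrapped contributions uniformly, so there is no case split and no delicate endpoint issue analogous to the paper's $1/(X_i^*-1)$ tail bound (which degenerates at $X_i^*\le 1$ and must be handled separately there), and you average over $X_{i+1}$ exactly rather than worst-casing. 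The paper's version is more modular, reusing standard QPE bounds off the shelf and making the ``wraparound equals bad states'' picture explicit. The only point needing care in your sketch is the regime where $u+\delta$ approaches $0$ or $2^m$ (e.g.\ $k=1$, $\delta\to 1$), where one should note that $\sin^2(\pi\delta)/(k-\delta)^2$ stays bounded; this follows from $\sin(\pi x)\ge 2\min(x,1-x)$ and does not affect the conclusion.
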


\begin{proof}
First, as per Definitions~\ref{def:opt-circuits} and~\ref{def:opt-circuits-basis-independent}, we have
\begin{equation}
    ||W - W'||_F^2 = \sum_{Y_{i-1}, X_i, X_{i+1}}^{2^m-1} |(W-W') (\ket{Y_{i-1}} \otimes \ket{X_i} \otimes \ket{X_{i+1}}) |^2
\end{equation}
where $Y_{i-1}, X_i, X_{i+1}$ are each $m$-bit integers.
(We denote the first (top) block of qubits in Figure~\ref{fig:commutation-lemma} as block $i-1$, the middle as $i$, and the last (bottom) as $i+1$ in reference to the larger context of Theorem~\ref{thm:oqft}).
Noting that both $W$ and $W'$ only apply phase rotations proportional to the values of $Y_{i-1}$ and $X_{i+1}$, we can rewrite the above sum in terms of new unitaries $\overline{W}_{Y_{i-1}, X_{i+1}}$ and $\overline{W}'_{Y_{i-1}, X_{i+1}}$ which only operate on block $i$, and the values of $Y_{i-1}$ and $X_{i+1}$ parametrize the unitaries:
\begin{equation}
    \label{eq:W-frobenius}
    ||W - W'||_F^2 = \sum_{X_i}^{2^m-1} \sum_{Y_{i-1}, X_{i+1}}^{2^m-1} |(\overline{W}_{Y_{i-1}, X_{i+1}}-\overline{W}'_{Y_{i-1}, X_{i+1}}) \ket{X_i}|^2
\end{equation}
Thus we now bound $|(\overline{W}_{Y_{i-1}, X_{i+1}}-\overline{W}'_{Y_{i-1}, X_{i+1}}) \ket{X_i}|^2$ as a function of $X_i$.

By inspection of Figure~\ref{fig:commutation-lemma}, we have
\begin{equation}
    \overline{W}_{Y_{i-1}, X_{i+1}} \ket{X_i} = \omega^{X_i Y_{i-1} / 2^m} \ket{\widetilde{X_i}}
\end{equation}
where $\ket{\widetilde{X_i}} = \sum \alpha_{X_i'} \ket{X_i'}$ is the result of applying $\qft$, then the second phase rotation $\omega^{X_{i+1} Y_i / 2^m}$, then $\qft^\dagger$ to $\ket{X_i}$ (see Figure~\ref{fig:oqft-detail}(b)).
This notation provides a straightforward way of expressing the action of $\overline{W}'_{Y_{i-1}, X_{i+1}}$ on $\ket{X_i}$:
\begin{equation}
\overline{W}'_{Y_{i-1}, X_{i+1}} \ket{X_i} = \omega^{X_i Y_{i-1} / 2^m} \sum_{X_i'} (\omega^{Y_{i-1}/2^{m}})^{(X_i'-X_i)} \alpha_{X_i'} \ket{X_i'}.
\end{equation}
We may thus write:
\begin{equation}
\label{eq:distance-sum}
|(\overline{W}_{Y_{i-1}, X_{i+1}}-\overline{W}'_{Y_{i-1}, X_{i+1}}) \ket{X_i}|^2 = \sum_{X_i'}^{2^{m}-1} |(1 - (\omega^{Y_{i-1}/2^{m}})^{(X_i'-X_i)})|^2 |\alpha_{X_i'}|^2.
\end{equation}
We proceed by separating this sum into two parts: the ``close'' part in which $|X_i'-X_i| < X_i^* = \min{(X_i, 2^m-X_i)}$ and thus $X_i'$ has not wrapped around $\bmod 2^m$; and the ``far'' part in which it has wrapped around.
The study of quantum phase estimation allows us to bound the size of both parts of this sum.

For the close part, when $|\alpha_{X_i'}|^2$ is large, we are guaranteed that $|X_i' - X_i|$ is (relatively) small because, by design, $0 < X_i'-X_i < 2^m-1$.
This implies that the phase difference $|(1 - (\omega^{Y_{i-1}/2^{m}})^{(X_i'-X_i)})|^2$ is small.
We now make this idea quantitative.
Using the fact that $|1-e^{i\delta}| \leq |\delta|$, we have
\begin{equation}
    |(1 - (\omega^{Y_{i-1}/2^{m}})^{(X_i'-X_i)})|^2 \leq |2\pi Y_{i-1} \Delta/2^{2m}|^2 = \frac{4\pi^2 Y_{i-1}^2}{2^{4m}} \Delta^2
\end{equation}
where $\Delta = X_i'-X_i$.
Meanwhile, from Equation~\ref{eq:block-approx} and quantum phase estimation (see Eq. 5.29 of Sec. 5.2.1 of \cite{nielsen_quantum_2011}), we have that
\begin{equation}
\label{eq:alpha-bound}
    |\alpha_{\Delta+X_i}| \leq \min{\left(\frac{1}{2|\Delta - X_{i+1}/2^m|}, 1 \right)}
\end{equation}
This implies that for the ``close'' part of the sum, we have:
\begin{equation}
\sum_{\substack{X_i'\\ |X_i'-X_i|<X_i^*}} |(1 - (\omega^{Y_{i-1}/2^{m}})^{(X_i'-X_i)})|^2 |\alpha_{X_i'}|^2 < \frac{4\pi^2 Y_{i-1}^2}{2^{4m+2}} \left[\sum_{\Delta=-X_i^*}^{-1} \frac{\Delta^2}{\Delta^2} + 2 + \sum_{\Delta=2}^{X_i^*} \frac{\Delta^2}{(\Delta - 1)^2}\right]
\end{equation}
Noting the following three facts:
\begin{equation}
\sum_{\Delta=2}^{X_i^*} \frac{\Delta^2}{(\Delta - 1)^2} = \sum_{\Delta=1}^{X_i^*-1} \frac{(\Delta+1)^2}{\Delta^2}
\end{equation}
\begin{equation}
    \sum_{\Delta=1}^\infty 1/\Delta^2 < 2
\end{equation}
\begin{equation}
    \label{eq:harmonic}
    \sum_{\Delta=1}^{X_i^*} 1/\Delta < \ln X_i^* + 1 < O(m)
\end{equation}

we have
\begin{equation}
\label{eq:close-sum}
\sum_{\substack{X_i'\\ |X_i'-X_i|<X_i^*}} |(1 - (\omega^{Y_{i-1}/2^{m}})^{(X_i'-X_i)})|^2 |\alpha_{X_i'}|^2 < \frac{4\pi^2 Y_{i-1}^2}{2^{4m+2}} \left(2X_i^* + O(n) \right)
\end{equation}

The ``far'' part of the sum is simpler to evaluate, because we only have a very loose bound on $|X_i' - X_i| < 2^m$ and thus $|(1 - (\omega^{Y_{i-1}/2^{m}})^{(X_i'-X_i)})| \leq 2 \pi Y_{i-1}/2^m$.
This is balanced by the fact that the $|\alpha_{X_i'}|$ are generally small.
We have (see~\cite{nielsen_quantum_2011} Section 5.2.1):
\begin{equation}
    \sum_{\Delta > X_i^*} |\alpha_{\Delta+X_i}|^2 < \frac{1}{2(X_i^*-1)}
\end{equation}
This yields
\begin{equation}
\label{eq:far-sum}
\sum_{\substack{X_i'\\ |X_i'-X_i|\geq X_i^*}} |(1 - (\omega^{Y_{i-1}/2^{m}})^{(X_i'-X_i)})|^2 |\alpha_{X_i'}|^2 < \frac{4\pi^2 Y_{i-1}^2}{2^{2m}} \frac{1}{2(X_i^*-1)}
\end{equation}

Now combining Eqs. \ref{eq:distance-sum}, \ref{eq:close-sum}, and \ref{eq:far-sum}, we have
\begin{equation}
|(\overline{W}_{Y_{i-1}, X_{i+1}}-\overline{W}'_{Y_{i-1}, X_{i+1}}) \ket{X_i}|^2 < \frac{2\pi^2 Y_{i-1}^2}{2^{2m}} \left(\frac{X_i^*}{2^{2m}} + \frac{1}{X_i^*-1} \right)
\end{equation}

Finally using this expression in Equation~\ref{eq:W-frobenius}, and again using Equation~\ref{eq:harmonic}, we have
\begin{equation}
    ||W - W'||_F^2 = 2\pi^2 \cdot 2^m \cdot \sum_{Y_{i-1}}^{2^m-1} \frac{Y_{i-1}^2}{2^{2m}} \sum_{X_i}^{2^m-1} \left(\frac{X_i^*}{2^{2m}} + \frac{1}{X_i^*-1} \right) < 2^{2m} \cdot \Omega(m)
\end{equation}

\end{proof}

\section{Linear-depth approximate QFT}
\label{app:aqft}

In the construction of the optimistic QFT, we reference an approximate QFT construction in which operations are applied on blocks of qubits.
In Algorithm~\ref{alg:aqft} we record this construction explicitly, and in Lemma~\ref{lem:aqft-error} we bound its error.

\begin{algorithm}
	\caption{The approximate quantum Fourier transform.}
	\label{alg:aqft}

	\KwIn{Quantum register with $n$ qubits; block size $m$}

    \vspace{1em}

    Let $\imax = \lceil n/m \rceil$.  For $i$ from 0 to $\imax-1$:
\vspace{1em}
	\begin{enumerate}
		\item Apply $\qft_{2^m}$ to block $i$.
		\item Apply the phase rotation $\exp{(2\pi i X_{i+1} Y_i / 2^{2m})}$, where $X_{i+1}$ and $Y_i$ are the values of blocks $i+1$ and $i$ respectively as integers in the computational basis.
	\end{enumerate}
\end{algorithm}

\begin{lemma}[Frobenius error of linear-depth approximate QFT]
    \label{lem:aqft-error}
    Let $\qft'$ be the unitary induced by Algorithm~\ref{alg:aqft} with block size $m$, and let $\qft$ be the exact QFT.
    Then
    \begin{equation}
        \frac{1}{2^n} ||\qft' - \qft||_F^2 \leq \frac{4 \pi^2}{3} \cdot \left\lceil \frac{n}{m} \right\rceil \cdot \frac{1}{2^{m}}
    \end{equation}
    (cf. Definition~\ref{def:opt-circuits-basis-independent}).
\end{lemma}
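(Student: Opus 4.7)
The bound follows by analyzing the per-input-state error $|(\qft-\qft')\ket{x}|^2$ using the product structure of the QFT's output, then combining across blocks via a hybrid argument. Writing $x = (X_0,\ldots,X_{\imax-1})$, Equation~\ref{eq:block-exact} expresses $\qft\ket{x}$ as a tensor product $\bigotimes_i \ket{A_i}$ over the $\imax$ blocks; Algorithm~\ref{alg:aqft} similarly gives $\qft'\ket{x} = \bigotimes_i \ket{B_i}$, where $\ket{B_i}$ is obtained from $\ket{A_i}$ by truncating the exponent sum to just the $j \in \{i, i+1\}$ terms. The discarded contribution per block is captured by $R_i = \sum_{j \geq i+2} X_j/2^{m(j-i)} < 2^{-m}$.

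First, I would establish a single-block bound. Since $\ket{A_i}$ and $\ket{B_i}$ are equal-amplitude superpositions differing only by the phases $e^{-iY_i\theta_i}$ with $\theta_i = 2\pi R_i/2^m$, a direct estimate using $|1-e^{i\phi}|^2 \leq \phi^2$ and $\sum_{Y=0}^{2^m-1} Y^2 \leq 2^{3m}/3$ yields
\begin{equation*}
|\ket{A_i}-\ket{B_i}|^2 \;\leq\; \frac{\theta_i^2}{2^m}\sum_{Y_i} Y_i^2 \;\leq\; \frac{4\pi^2 R_i^2}{3} \;\leq\; \frac{4\pi^2}{3\cdot 2^{2m}}.
\end{equation*}

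Next, I would aggregate the per-block errors. The elementary inequality $|A_1\otimes A_2 - B_1\otimes B_2| \leq |A_1-B_1| + |A_2-B_2|$ for unit vectors iterates to $|\bigotimes\ket{A_i} - \bigotimes\ket{B_i}| \leq \sum_i |\ket{A_i}-\ket{B_i}|$, and Cauchy--Schwarz then gives $|(\qft-\qft')\ket{x}|^2 \leq \imax \sum_i |\ket{A_i}-\ket{B_i}|^2 \leq 4\pi^2\imax^2/(3\cdot 2^{2m})$. Since this bound is uniform in $x$, it also upper-bounds the average error $\frac{1}{2^n}\|\qft-\qft'\|_F^2$.

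Finally, I would reconcile the resulting $\imax^2/2^{2m}$ with the lemma's linear-in-$\imax$ dependence. In the nontrivial regime $\imax \leq 2^m$, $\imax^2/2^{2m} \leq \imax/2^m$, matching the claimed $\frac{4\pi^2}{3}\imax/2^m$; in the opposite regime the claimed bound exceeds $4\pi^2/3 > 4$ and is trivially true since $\frac{1}{2^n}\|A-B\|_F^2 \leq 4$ for any unitaries $A, B$. The only step requiring real care is the per-block estimate---pairing the phase deviations $Y_i\theta_i$ against the second moment $\sum Y^2$ to extract the factor $4\pi^2/3$---after which the aggregation via hybrid plus Cauchy--Schwarz and the endgame comparison are mechanical.
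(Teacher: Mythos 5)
Your proof is correct and, notably, takes a more careful route than the paper's own. The paper bounds per-basis-state error directly: after expanding $\qft\ket{x}$ and $\qft'\ket{x}$ as products over blocks, it writes $\|\qft'-\qft\|_F^2$ as a sum over $x$, $i$, and $Y_i$ of a single-block phase discrepancy, then applies $|e^{i\alpha}-e^{i\beta}|\le|\alpha-\beta|$ and sums $Y_i^2$ to land at $\frac{4\pi^2}{3}\lceil n/m\rceil/2^m$. That first step --- decomposing the Frobenius norm of a difference of tensor products into a sum of per-block differences --- is asserted as an equality, which it is not (the hybrid inequality only gives $|\bigotimes A_i - \bigotimes B_i|\le\sum_i|A_i-B_i|$, and squaring costs you a Cauchy--Schwarz factor of $\imax$), though the paper's final bound happens to remain valid. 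You avoid this issue entirely: your hybrid $+$ Cauchy--Schwarz route yields the honest bound $\frac{4\pi^2}{3}\imax^2/2^{2m}$, and your observation that $\imax^2/2^{2m}\le\imax/2^m$ precisely when $\imax\le 2^m$ --- with the complementary regime being vacuous because the claimed bound then exceeds the universal ceiling $\frac{1}{2^n}\|A-B\|_F^2\le 4$ for unitaries --- cleanly recovers the lemma's stated form. The per-block estimate $|\ket{A_i}-\ket{B_i}|^2\le\frac{4\pi^2 R_i^2}{3}$ with $R_i<2^{-m}$ matches the paper's truncation bound, so the constants line up. In short: same core ingredients ($|1-e^{i\phi}|\le|\phi|$ on the truncated tail, second moment of $Y_i$), but you supply the aggregation step that the paper glosses over, which makes your version arguably the more defensible proof.
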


\begin{proof}
    First, note that
    \begin{equation}
        ||\qft' - \qft||_F^2 = \sum_{x=0}^{2^n-1} |\qft' \ket{x} - \qft \ket{x} |^2
    \end{equation}
    for computational basis states $\ket{x}$ (see Def.~\ref{def:opt-circuits}).
    The result of applying Algorithm~\ref{alg:aqft} to such a state $\ket{x}$ is given in Equation~\ref{eq:block-approx}, reproduced (in slightly different form) here:
    \begin{equation}
	   \qft'\ket{x} = \bigotimes_{i=0}^{\imax} \sum_{Y_i = 0}^{2^m - 1} \omega^{(X_i + X_{i+1}/2^m) Y_i} \ket{Y_i}
    \end{equation}
    where $X_i$ is the $i^\mathrm{th}$ block of $x$ and $\omega = \exp{2\pi i/2^m}$.
    The action of the exact QFT is
    \begin{equation}
	   \qft\ket{x} = \bigotimes_{i=0}^{\imax} \sum_{Y_i = 0}^{2^m - 1} \omega^{\sum_{j=i}^{\imax}(X_j/2^{m(j-i)}) Y_i} \ket{Y_i}
    \end{equation}
    Thus we have
\begin{equation}
	||\qft' - \qft||_F^2 = \sum_x \sum_i \sum_{Y_i = 0}^{2^m - 1} \left| \omega^{(X_i + X_{i+1}/2^m) Y_i} -  \omega^{\sum_{j=i}^{n/m}(X_j/2^{m(j-i)}) Y_i} \right|^2
\end{equation}
Using the definition $\omega = e^{2\pi i/2^m}$ and the fact that $|e^{i \alpha} - e^{i \beta}| \leq |\alpha - \beta|$, we have
\begin{equation}
	||\qft' - \qft||_F^2 \leq \sum_x \sum_i \sum_{Y_i = 0}^{2^m - 1} \left[ \frac{2 \pi}{2^m} \sum_{j=i+2}^{n/m} (X_j/2^{m(j-i)}) Y_i \right]^2
\end{equation}
The $X_j$ are all at most $2^m - 1$, so we can bound $\sum_{j=i+2}^{n/m} X_j/2^{m(j-i)} \leq 1/2^m$, yielding
\begin{equation}
	||\qft' - \qft||_F^2 \leq \sum_x \sum_i \sum_{Y_i = 0}^{2^m - 1} \left[ \frac{2 \pi}{2^{2m}} Y_i \right]^2 \leq 2^n \cdot \imax \cdot \frac{4 \pi^2}{2^{4m}} \cdot \frac{2 \cdot 2^{3m}}{6}
\end{equation}
or, finally,
\begin{equation}
    \frac{1}{2^n} ||\qft' - \qft||_F^2 \leq \frac{4 \pi^2}{3} \cdot \left\lceil \frac{n}{m} \right\rceil \cdot \frac{1}{2^{m}}
\end{equation}
which is what we desired to show.

\end{proof}

\section{Proof of factoring via optimistic QFT}
\label{app:shor-oqft-proof}

Here we show that using optimistic QFTs to build an optimistic multiplier, as described in Section~\ref{sec:opt-qft-factoring}, is sufficient for factoring with high probability.

\begin{restatable}[Optimistic multiplier]{theorem}{opt-multiplier}
    \label{thm:opt-mult}
	Consider the circuit described in~\cite{kahanamoku-meyer_fast_2024}, Appendix E; which implements the unitary $U \ket{y} \to \ket{cy \bmod N}$ for $y<N$, and $c$ and $N$ classical integers having $\gcd(c, N) = \gcd(c-1, N) = 1$.
    Replacing all QFTs in that circuit with optimistic QFTs of error $\epsilon$ yields an optimistic multiplier with error parameter $O(\epsilon)$.
\end{restatable}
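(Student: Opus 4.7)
The plan is a direct hybrid argument based on the fact, following from Definition~\ref{def:opt-circuits-basis-independent}, that the normalized Frobenius error $\|\widetilde{\U} - \U\|_F^2/\dim\H$ is preserved under tensoring with identity and invariant under left/right multiplication by unitaries. Concretely, if $\qft$ acts on a subregister of dimension $2^n$ and the rest of the system has dimension $D$, then $\|(\oqft-\qft)\otimes I_D\|_F^2/(2^n D) = \|\oqft-\qft\|_F^2/2^n \leq \epsilon$, so embedding an optimistic QFT into the larger circuit costs nothing in normalized error.

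Next, I would use the fact that the circuit of~\cite{kahanamoku-meyer_fast_2024}, Appendix E contains a constant number $k = 4$ of QFTs. Writing the original multiplier as $\U = G_k W_k G_{k-1} \cdots W_1 G_0$, where $W_j$ is the $j$-th QFT (tensored with identity on the remaining qubits) and $G_j$ collects the intervening unitary gates, I form the analogous $\oU$ in which each $W_j$ is replaced with its optimistic version $\widetilde{W}_j$. The standard telescoping identity gives
\[
\oU - \U = \sum_{j=1}^k G_k\widetilde{W}_k \cdots \widetilde{W}_{j+1} G_j (\widetilde{W}_j - W_j) G_{j-1} W_{j-1} \cdots W_1 G_0,
\]
and since the $G_j$, $W_j$, and $\widetilde{W}_j$ are all unitary, applying the triangle inequality and the unitary invariance of $\|\cdot\|_F$ yields $\|\oU - \U\|_F \leq \sum_{j=1}^k \|\widetilde{W}_j - W_j\|_F \leq k\sqrt{\epsilon \cdot \dim\H}$. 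Squaring and normalizing gives $\|\oU - \U\|_F^2/\dim\H \leq k^2\epsilon = O(\epsilon)$, which is exactly the desired optimistic bound.

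The main obstacle is essentially bookkeeping rather than genuine analysis. I would need to verify that each of the four QFTs in the Appendix E construction really appears as a clean tensor factor $\qft \otimes I$ at its moment in the circuit (which holds whenever other gates at the same time step act on disjoint qubits, as in any well-formed circuit), and that the embedding dimensions work out correctly so that the per-QFT normalized error $\epsilon$ lifts to a normalized error $\epsilon$ in the full multiplier Hilbert space. A secondary subtlety worth noting is that $\U$ is only physically meaningful on the subspace $y < N$, but the Frobenius-norm definition of optimistic error is taken over the full ancilla-extended Hilbert space, so the hybrid bound transfers directly with no restriction needed; in particular, the $O(\epsilon)$ bound automatically controls the error on the relevant $y<N$ subspace via Lemma~\ref{lem:large-error-subspaces}.
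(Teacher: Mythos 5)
Your telescoping step is fine as far as it goes: the Frobenius norm is unitarily invariant, so replacing the $k$ QFTs one at a time does give $\|\oU - \U\|_F^2/\dim\H_{\mathrm{full}} \leq k^2\epsilon$, where $\H_{\mathrm{full}}$ is the full Hilbert space of all $2n+O(n/\log n)$ qubits the circuit touches. The genuine gap is in your closing claim that this ``transfers directly'' to the physical inputs. The theorem asserts the circuit is an optimistic circuit for $U\ket{y}=\ket{cy \bmod N}$, so the averaging in Definition~\ref{def:opt-circuits} is over the $\approx N$ basis states $\ket{y}$ with $y<N$ and the ancillas fixed in their initial state. That subspace has dimension $\approx 2^n$ inside a full space of dimension $\approx 2^{2n+O(n/\log n)}$, and Lemma~\ref{lem:large-error-subspaces} runs in the wrong direction for your purpose: from a full-space average error of $O(\epsilon)$ it only yields an error on a subspace of $O(\epsilon)\cdot \dim\H_{\mathrm{full}}/\dim\H_{\mathrm{sub}} = O\!\left(\epsilon \cdot 2^{n+O(n/\log n)}\right)$, which is vacuous. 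Nothing in your argument prevents the optimistic QFT's bad basis states from being concentrated exactly on the states the circuit actually feeds it; for instance, a circuit that always hands an internal QFT a register near the all-zeros state would have $O(1)$ error on every physical input while your full-space hybrid bound remains $O(\epsilon)$.

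The missing idea---and the actual content of the paper's proof---is to track which basis states each internal QFT receives as a function of the input $y$: they are $\ket{y}$, $\ket{cy \bmod N}$, and (the first $n$ bits of) $\ket{w}$ with $w \approx ((c-1)y \bmod N)/N$. The hypotheses $\gcd(c,N)=\gcd(c-1,N)=1$, which your argument never uses (a red flag), guarantee these maps are injective on $y<N$, so as $y$ ranges over its $N-1$ values each QFT sees $N-1$ \emph{distinct} computational basis states. Only then does the optimistic guarantee bite: by Lemma~\ref{lem:large-error-subspaces} with $N/2^n=O(1)$, the average error each QFT contributes over the inputs it actually receives is $O(\epsilon)$, and a triangle inequality over the constant number of QFTs finishes the proof. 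What you dismissed as ``bookkeeping'' is exactly where the argument lives; without the injectivity step the statement would not follow.
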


\begin{proof}
    Inside of the multiplier, QFTs are applied to computational basis states $\ket{y}$, $\ket{cy \bmod N}$, and $\ket{w}$ where $w \approx ((c-1)y \bmod N)/N$ expressed as an $n + O(\log n)$ bit binary fraction.
	Since $c$ and $c-1$ are coprime to $N$, both $\ket{cy \bmod N}$ and the first $n$ bits of $\ket{w}$ will respectively take on a unique value for each input $y<N$.%
    \footnote{The last $O(\log n)$ bits of $w$ can be handled exactly by appending a small exact QFT circuit plus some phase rotations, so they can be ignored for the error analysis.}
	Thus when the multiplier is applied to each of the $N-1$ different computational basis states $\ket{y}$ having $0 < y < N$, each $n$-qubit QFT inside the multiplier will explore a new vector in its input Hilbert space.
    Since $N/2^n = O(1)$, by Lemma~\ref{lem:large-error-subspaces} the Frobenius error of the optimistic QFT on a subspace of dimension $N$ is $O(\epsilon)$.
    Thus by triangle inequality we may bound the Frobenius error of the optimistic multiplier as the sum of the Frobenius errors of the optimistic QFTs, proving the theorem.
\end{proof}

\begin{restatable}[Factoring with the optimistic multiplier]{theorem}{shoroqftthm}
	Consider Shor's algorithm for integer factorization~\cite{shor_polynomial-time_1997}, which applies quantum period finding to the modular exponentiation unitary $U \ket{x} \ket{r} \to \ket{x} \ket{rg^x \bmod N}$ for classical integers $g$ and $N$.
    If that unitary is implemented via $O(n)$ applications of the optimistic multiplier of Theorem~\ref{thm:opt-mult} (as described in~\cite{kahanamoku-meyer_fast_2024}, Appendix E), and the modular exponentiation circuit is applied to the initial state $\sum_x \ket{x} \ket{r}$ for some classically-chosen random integer $r<N$, the probability of successfully finding the factors is at least $p \geq O(1) \cdot (1-O(n\sqrt{\epsilon}))$.
\end{restatable}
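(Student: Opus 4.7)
The plan is to bound the total error of the modified modular exponentiation via a hybrid argument over its $T = O(n)$ controlled multiplier applications, then convert this $\ell_2$-distance bound into a bound on the success probability of the subsequent period finding. The crucial technical idea is to take the expectation over the classical random offset $r$: multiplication by any fixed $g^s$ preserves the uniform distribution on $(\mathbb{Z}/N\mathbb{Z})^*$, so a single random $r$ simultaneously randomizes the target register input to every multiplier, letting the \emph{average-case} guarantee of Theorem~\ref{thm:opt-mult} apply at every call.

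First I would apply the telescoping identity $\tilde V - V = \sum_{i=1}^T \tilde V_T \cdots \tilde V_{i+1} (\tilde V_i - V_i) V_{i-1} \cdots V_1$, where $V_i$ (respectively $\tilde V_i$) is the $i$-th ideal (respectively optimistic) controlled multiplier. Unitarity of the surrounding factors together with the triangle inequality gives
\begin{equation*}
    \|(\tilde V - V)\ket{\psi_0(r)}\| \leq \sum_{i=1}^T \|(\tilde V_i - V_i)\ket{\phi_{i-1}(r)}\|,
\end{equation*}
where $\ket{\phi_{i-1}(r)} = V_{i-1} \cdots V_1 \ket{\psi_0(r)}$ is the ideal state entering the $i$-th multiplier, a superposition over basis states $\ket{x}\ket{rg^{s_{i-1}(x)} \bmod N}$ with $s_{i-1}(x) = \sum_{j < i-1} 2^j x_j$ depending only on bits $x_0, \ldots, x_{i-2}$.

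Next I would bound each summand in expectation over $r$. Because the $i$-th controlled multiplier acts nontrivially only when $x_{i-1}=1$ and $s_{i-1}$ is independent of $x_{i-1}$, grouping equal branches reduces the error to $\|(\tilde V_i - V_i)\ket{\phi_{i-1}(r)}\|^2 = \frac{1}{2^i} \sum_{s=0}^{2^{i-1}-1} e_i(rg^s \bmod N)$, where $e_i(y) = \|(\tilde M_i - M_i)\ket{y}\|^2$ is the per-basis-state error of the uncontrolled $i$-th multiplier. For $r$ uniform in $(\mathbb{Z}/N\mathbb{Z})^*$, each map $r \mapsto rg^s \bmod N$ is a bijection of the group, so $\mathbb{E}_r[e_i(rg^s \bmod N)] = \frac{1}{\phi(N)} \sum_y e_i(y) \leq O(\epsilon)$ by Theorem~\ref{thm:opt-mult} together with $\phi(N) = \Theta(N)$ for the semiprimes of interest. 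Hence $\mathbb{E}_r[\|(\tilde V_i - V_i)\ket{\phi_{i-1}(r)}\|^2] \leq O(\epsilon)$ for every $i$, and Jensen's inequality upgrades this to $\mathbb{E}_r[\|(\tilde V_i - V_i)\ket{\phi_{i-1}(r)}\|] \leq O(\sqrt{\epsilon})$.

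Finally, taking the expectation in the hybrid bound and summing over $T = O(n)$ multipliers gives $\mathbb{E}_r[\|(\tilde V - V)\ket{\psi_0(r)}\|] \leq O(n\sqrt{\epsilon})$. Because the total-variation distance between any two measurement distributions is bounded by the $\ell_2$ distance of the underlying pure states, $|p(r) - p_{\text{ideal}}(r)| \leq \|(\tilde V - V)\ket{\psi_0(r)}\|$; combining with Shor's original analysis ($\mathbb{E}_r[p_{\text{ideal}}(r)] = \Omega(1)$) yields $\mathbb{E}_r[p(r)] \geq \Omega(1) - O(n\sqrt{\epsilon})$, which gives the stated bound. The hard part is the per-multiplier bound: Theorem~\ref{thm:opt-mult} is a pure average-case statement and \emph{a priori} permits the multiplier to concentrate its entire error budget on exactly the inputs that arise during the modular exponentiation. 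The saving move is recognizing that a single random shift $r$ randomizes every multiplier's target-register input simultaneously, by left-invariance of the uniform measure on $(\mathbb{Z}/N\mathbb{Z})^*$; a minor failure analysis handles the $O(1/\log N)$-fraction of $r$ that fail to be coprime to $N$, which either directly reveal a nontrivial factor or can be absorbed into the failure probability.
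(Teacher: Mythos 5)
Your proof is correct and follows essentially the same strategy as the paper's: randomize the target register via a classical $r$, observe that multiplication by fixed constants preserves the uniform distribution on $(\mathbb{Z}/N\mathbb{Z})^*$ so that the average-case bound of Theorem~\ref{thm:opt-mult} applies at each multiplier, apply Jensen to obtain a per-multiplier error of $O(\sqrt{\epsilon})$, and accumulate over the $O(n)$ multipliers by the triangle inequality. Your telescoping hybrid argument and trace-distance conversion are slightly more explicit renderings of the paper's error-vector accumulation and fidelity bound, but the underlying argument is identical.
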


\begin{proof}
	We may express any one optimistic multiplication $\widetilde{\mathcal{U}}_\times$ as
    \begin{equation}
        \widetilde{\mathcal{U}}_\times \ket{y} = \ket{cy \bmod N} - \varepsilon_y \ket{e_y}
    \end{equation}
	where $\varepsilon_y \ket{e_y} = - (\widetilde{\mathcal{U}}_\times-\mathcal{U}_\times)\ket{y}$ is the error vector.
    Note that $\ket{e_y}$ is not orthogonal to $\ket{cy \bmod N}$, so the overall state still has norm 1.
    For clarity of exposition we have pulled out a factor $\varepsilon_y$ so that $\ket{e_y}$ also is normalized; by absorbing any phase into $\ket{e_y}$ we may take $\varepsilon_y$ to be a nonnegative real number.
    The overall modular exponentiation circuit can be expressed similarly as
    \begin{equation}
        \ket{x}\ket{r} \to \ket{x}(\ket{rg^x \bmod N} - \varepsilon_{x,r}' \ket{e_{x,r}'})
    \end{equation}
    for some error vector $\varepsilon_{x,r}' \ket{e_{x,r}'}$.
    (Importantly, the bits of $x$ are only used as controls for phase rotations; they are not touched by the optimistic QFTs making up the multiplier circuit so the error is only on the output register.)
    Note that by the triangle inequality, $\varepsilon_{x,r}' \leq \sum_{y_i} \varepsilon_{y_i}$ where the sum is over intermediate values $y_i$ encountered during the modular exponentiation.

    Applying this modular exponentiation to the initial state as described in the theorem yields
	\begin{equation}
		\ket{\psi_\mathsf{opt}} \equiv \widetilde{\mathcal{U}}_\mathsf{opt} \left[\frac{1}{2^{n/2}} \sum_x \ket{x} \ket{r} \right] = \frac{1}{2^{n/2}} \sum_{x} \ket{x} (\ket{rg^x \bmod N} - \varepsilon_{x,r}' \ket{e_{x,r}'})
	\end{equation}
    The ``target'' state which would result from an exact modular exponentiation is
    \begin{equation}
        \ket{\psi_\mathsf{Shor}} = \frac{1}{2^{n/2}} \sum_x \ket{x} \ket{rg^x \bmod N}.
    \end{equation}
    Since applying period finding to $\ket{\psi_\mathsf{Shor}}$ yields the factors of $N$ with probability $O(1)$, the probability $p$ of successfully finding the factors using $\ket{\psi_\mathsf{opt}}$ is proportional to
    \begin{equation}
        p \propto \mathbb{E}_r \left[ |\braket{\psi_\mathsf{Shor}|\psi_\mathsf{opt}}|^2 \right] \geq \mathbb{E}_r \left[ \left(\frac{1}{2^n} \sum_x (1 -\varepsilon_{x,r}')\right)^2 \right] \geq 1 - \frac{2}{2^n} \sum_x \mathbb{E}_r \left[ \varepsilon_{x,r}' \right].
    \end{equation}

    Now recall from above that $\varepsilon_{x,r}' \leq \sum_{y_i} \varepsilon_{y_i}$, where $y_i$ are the intermediate values encountered during the modular exponentiation for a particular $x$ and $r$.
    Then $\mathbb{E}_r \left[ \varepsilon_{x,r}' \right] \leq \sum_{y_i} \mathbb{E}_r \left[ \varepsilon_{y_i} \right]$.
    Since for all $i$, $y_i = rz$ for some integer $z$ which does not depend on $r$, the $y_i$ are (independently) random elements of the multiplicative group modulo $N$, and thus by Theorem~\ref{thm:opt-mult} and Definition~\ref{def:opt-circuits}, $\mathbb{E}_r \left[ \varepsilon_{y_i} \right] \leq O(\sqrt{\epsilon})$.
    There are $O(n)$ multiplications in the modular exponentiation, so $\mathbb{E}_r \left[ \varepsilon_{x,r}' \right] \leq O(n) \cdot \mathbb{E}_r \left[ \varepsilon_{y_i} \right] \leq O(n\sqrt{\epsilon})$.
    Thus $\mathbb{E}_r \left[ |\braket{\psi_\mathsf{Shor}|\psi_\mathsf{opt}}|^2 \right] \geq 1-O(n\sqrt{\epsilon})$ and finally $p \propto 1 - O(n\sqrt{\epsilon})$, proving the theorem.
\end{proof}

\textbf{Remark---}
Intuitively, the random factor $r$ should not even be necessary, as we do not expect the intermediate values in the modular exponential to have any particular structure that biases them towards basis states with high error in the optimistic QFT.

\section{Proofs from Section \ref{sec:oqft-reduction}}
\label{sec:sec-5-pfs}
In this appendix we provide the proofs deferred from Section \ref{sec:oqft-reduction}, starting with the approximate controlled phase gradient.

\approxpg*{}
\begin{proof}
    The construction is extremely similar to the standard approximate QFT \cite{coppersmith_approximate_2002}. First, note that
    \[e^{2\pi i x z / 2^n} \ket{x}\ket{z} =
    \ket{x} \left ( \bigotimes_{j=0}^{n-1} e^{2\pi i (0.x_j \dots x_{n-1})\cdot z_j} \ket{z_j} \right).\]
    In a similar manner to the QFT, this can be implemented using $O(n^2)$ controlled $Z$ rotation gates, many of which will be rotations through very small angles. Just as in the approximate QFT, we simply drop the small rotation gates, and implement the transformation
    \[\ket{x}\ket{z} \mapsto \ket{x} \left ( \bigotimes_{j=0}^{n-1} e^{2\pi i (0.x_j \dots x_{j+m-1})\cdot z_j} \ket{z_j}\right )\]
    where $m = \log(2 \pi n / \delta)$. This can be performed with $O(n\log(n/\delta)$ controlled $Z$ rotation gates. Furthermore, by grouping these gates in the correct way, the circuit can be implemented in depth $O(\log(n/\delta))$. In the first time step, we perform the required two-qubit gate between qubit $j$ of the first register and qubit $j$ of the second register, for $0\le j \le n-1$. In the second time step, we perform the required two-qubit gate between qubit $j$ of the first register and qubit $j-1$ of the second register, for $1 \le j \le n-1$, and so on, for $m$ steps.

    To bound the error, we note that we must simply sum the operator norms between each of the dropped controlled rotations and the identity. We see that for a controlled $Z$ rotation of angle $\alpha$, $\| I - \text{diag}(1, 1, 1, e^{i\alpha})\| = |1 - e^{i\alpha}| \le |\alpha|$.

    Thus, the total error is bounded by
    \begin{equation*}
        \sum_{j=0}^{n-m-1} \sum_{k=m+j}^{n-1}\pi 2^{-(k-j)} \le \pi \sum_{j=0}^{n-m-1} 2^{-m+1}
        \le 2\pi n 2^{-m}
        = \delta.
    \end{equation*}
\end{proof}

Next, we give further details on the construction of Theorem \ref{thm:purified-approx-qft}.

\paqft*{}
\begin{proof}
    We must now implement $V'(r_1, r_2) = \sum_{r_1, r_2} \ket{r_1, r_2}\bra{r_1, r_2} \otimes X_{2^n}^{r_1} Z_{2^n}^{r_2}$, which corresponds to controlled versions of the Weyl-Heisenberg operators. The operator $\sum_{r_1} \ket{r_1}\bra{r_1} \otimes X_{2^n}^{r_1}$ is simply a quantum-quantum addition,
    which we implement using the construction from~\cite{takahashi_quantum_2009}. This circuit has depth $O(\log n)$ and requires $O(n/\log n)$ ancillas. The operator $\sum_{r_2} \ket{r_2}\bra{r_2} \otimes Z_{2^n}^{r_2}$ is the controlled phase gradient operation discussed in Lemma \ref{lem:approxpg}.

    Let $\oU$ denote the derandomized optimistic QFT circuit with perfect controlled phase gradients and error parameter $\epsilon/9$, and let $\U'$ denote an implementation of $\oU$ using the approximate controlled-phase gradient operations as described in Lemma \ref{lem:approxpg} with error parameter $\sqrt{\epsilon}/3$. By the triangle inequality and Theorem \ref{thm:purification}, for all states $\ket{\psi}$
    \[\left| (\U' - \mathbb{I} \otimes \U) \left(\frac{1}{\sqrt{k}} \sum_i \ket{i} \otimes \ket{\psi}\right) \right|^2 \leq \epsilon.\]

    Thus, $\U'$ satisfies the conditions of the theorem.
\end{proof}

\end{document}